\newtheorem{theorem}{Theorem}
\newtheorem{lemma}[theorem]{Lemma}
\newtheorem{proposition}[theorem]{Proposition}
\newtheorem{corollary}[theorem]{Corollary}
{\theorembodyfont{\rmfamily}%
  \newtheorem{example}[theorem]{Example}
   }
\newenvironment{proof}{\noindent\textit{Proof.}}
{\QED\vskip\theorempostskipamount} 
\newenvironment{proofof}[1]{\noindent\textit{Proof
    \protect{#1}.}}
                       {\QED\vskip\theorempostskipamount}
\def\petitcarre{\vrule height4pt width 4pt depth0pt}
\def\QED{\relax\ifmmode\eqno{\hbox{\petitcarre}}\else{%
  \unskip\nobreak\hfil\penalty50\hskip2em\hbox{}\nobreak\hfil
  \petitcarre
  \parfillskip=0pt \finalhyphendemerits=0\par\smallskip}
  \fi}
\newcommand\RR{\mathcal{R}}
\def\un(#1){\underline{#1}\,}
\DeclareMathOperator{\Card}{Card}
\definecolor{ivoire}{rgb}{0.99,0.99,0.8}
\newcounter{hours}\newcounter{minutes}
\newcommand\computetime{\setcounter{hours}{\time/60}%
  \setcounter{minutes}{\time-\value{hours}*60}%
  \thehours\,h\,\theminutes}
\newcommand\dateandtime{\today\quad\computetime}
\numberwithin{theorem}{section}
\numberwithin{equation}{section}
\numberwithin{figure}{section}
\numberwithin{table}{section}
\title{The finite index basis property}
\author{Val\'erie Berth\'e$^1$, Clelia De Felice$^2$, 
Francesco Dolce$^3$, Julien Leroy$^4$,\\
 Dominique Perrin$^3$,
Christophe  Reutenauer$^5$,
Giuseppina Rindone$^3$\\\\
$^1$CNRS, Universit\'e Paris 7,
$^2$Universit\`a degli Studi di Salerno,\\
$^3$Universit\'e Paris Est, LIGM,
$^4$Universit\'e du Luxembourg,\\
 $^5$Universit\'e du Qu\'ebec \`a Montr\'eal}
\date{\dateandtime}
\begin{document}
\makeatletter
\def\@listI{%
  \leftmargin\leftmargini
  \setlength{\parsep}{0pt plus 1pt minus 1pt}
  \setlength{\topsep}{2pt plus 1pt minus 1pt}
  \setlength{\itemsep}{0pt}
}
\let\@listi\@listI
\@listi
\def\@listii {%
  \leftmargin\leftmarginii
  \labelwidth\leftmarginii
  \advance\labelwidth-\labelsep
  \setlength{\topsep}{0pt plus 1pt minus 1pt}
}
\def\@listiii{%
  \leftmargin\leftmarginiii
  \labelwidth\leftmarginiii
  \advance\labelwidth-\labelsep
  \setlength{\topsep}{0pt plus 1pt minus 1pt}
  \setlength{\parsep}{0pt} 
  \setlength{\partopsep}{1pt plus 0pt minus 1pt}
}
\makeatother
\maketitle

\begin{abstract}
We describe in this paper a connection between bifix codes, symbolic
dynamical systems
 and free groups. This is in the spirit of the
connection
established previously for the symbolic systems
corresponding to Sturmian words. We introduce a class of sets of
factors of an infinite word with linear factor complexity
containing Sturmian sets
and regular interval exchange sets, namely the class of tree
sets.
We prove as a main result that for a uniformly recurrent
tree  set $S$, 
 a finite bifix
code $X$ on the alphabet $A$ is $S$-maximal of $S$-degree $d$ if and only if it
is the basis of a subgroup of index $d$ of the free group on $A$.
\end{abstract}
\tableofcontents
\section{Introduction}

In this paper we study a relation between symbolic dynamical systems
 and bifix codes.
The paper is a continuation of the paper with
part of the present list of authors
on bifix codes and Sturmian words
\cite{BerstelDeFelicePerrinReutenauerRindone2012}.
We understand here by Sturmian words the generalization to arbitrary alphabets,
often called strict episturmian words or Arnoux-Rauzy words
(see  the survey~\cite{GlenJustin2009}), of the classical Sturmian
words
on two letters.

As a main result, we prove that, under  natural hypotheses
satisfied by  a Sturmian set
$S$,
 a finite bifix
code $X$ on the alphabet $A$ is $S$-maximal of $S$-degree $d$ if and only if it
is the basis of a subgroup of index $d$ of the free group on $A$
(Theorem~\ref{newTheoremBasis} called below the Finite Index Basis Theorem).

The proof  uses the property,
proved in~\cite{BertheDeFeliceDolceLeroyPerrinReutenauerRindone2013d},
that the sets of first return words in a uniformly recurrent tree set
containing the alphabet $A$
form a basis of the free group on $A$
(this result is referred to below as the Return Words Theorem).

We actually introduce several classes of uniformly recurrent
sets of words on $k+1$
letters having all $kn+1$ elements of length $n$ for all $n\ge 0$.

The smallest class ($BS$) is formed of the Sturmian sets on a binary
alphabet, that is, with $k=1$ (see Figure~\ref{drawingClasses}). It is contained both in the class of regular
interval exchange sets (denoted $RIE$) and of Sturmian sets (denoted
$S$). Moreover, it can be shown that the intersection of
$RIE$ and $S$ is reduced to $BS$. Indeed, Sturmian sets on more than two
letters are not the set of factors of an interval exchange transformation
with each interval labeled by a distinct letter (the construction
in~\cite{ArnouxRauzy1991} allows one to obtain the Sturmian sets of $3$
letters as an exchange of $7$ intervals labeled by $3$ letters).

The next one is the class of uniformly recurrent
sets satisfying the tree condition ($T$),
which contains the  previous ones.
The class of uniformly recurrent sets satisfying
the neutrality condition ($N$) contains
the class $T$. All these classes are contained in the
class of uniformly recurrent sets of complexity $kn+1$ on an alphabet with $k+1$ letters.

We have tried in all the paper to use the weakest possible
conditions to prove our results.
As an example, we prove that, under the neutrality
condition,  any finite $S$-maximal bifix code of $S$-degree
$d$ has $1+d(\Card(A)-1)$ elements (Theorem~\ref{theoremCardinality}
called below the Cardinality Theorem).

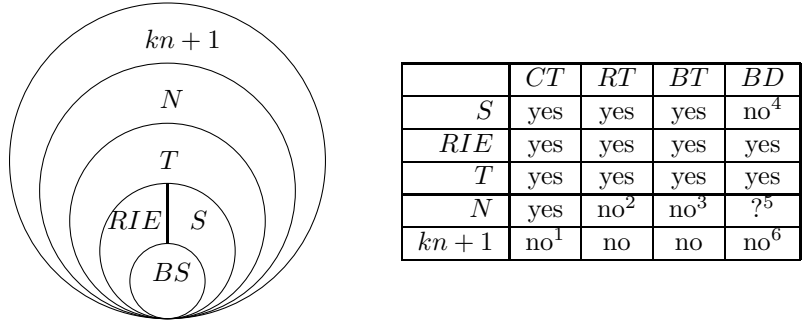
\begin{figure}[hbt]
\centering
\gasset{AHnb=0,Nadjust=wh}
\begin{picture}(120,50)
\put(0,0){\begin{picture}(60,45)

\drawcircle(30,5,10)\put(28,5){$BS$}

\drawcircle(30,9,18)
\put(30,10){\line(0,1){8}}
\put(22,12){$RIE$}\put(33,12){$S$}

\drawcircle(30,13,26)\put(29,20){$T$}

\drawcircle(30,17,34)\put(29,28){$N$}

\drawcircle(30,21,42)\put(27,36){$kn+1$}

\
\end{picture}
}
\put(60,20){
\begin{tabular}{|r|c|c|c|c|}\hline
     &$CT$&$RT$&$BT$&$BD$\\ \hline
$S$&yes&yes&yes&no$^4$\\ \hline
$RIE$&yes&yes&yes&yes\\ \hline
$T$&yes&yes&yes&yes\\ \hline
$N$&yes&no$^2$&no$^3$&?$^5$\\ \hline
$kn+1$&no$^1$&no&no&no$^6$\\ \hline
\end{tabular}
}
\end{picture}
\caption{The classes of uniformly recurrent sets on $k+1$ letters: Binary Sturmian ($BS$), Regular interval
  exchange ($RIE$),
Sturmian ($S$), Tree ($T$), Neutral ($N$), and finally of complexity
$kn+1$ (1: see Example 3.10 below, 2: see Example 5.9 in~\cite{BertheDeFeliceDolceLeroyPerrinReutenauerRindone2013d}, 
3: see Example~\ref{exampleJulien3} below,
4: see Example 4.4 in~\cite{BertheDeFeliceDolceLeroyPerrinReutenauerRindone2014},
5: it can be shown that the neutrality is preserved but it is not known
whether the uniform recurrence is,
6: see Example~\ref{exampleBifixeChacon2} below). }\label{drawingClasses}
\end{figure}

The class $RIE$ is closed under decoding by a maximal bifix
code (Theorem 3.13 in~\cite{BertheDeFeliceDolceLeroyPerrinReutenauerRindone2014} referred to as the
Bifix Decoding Theorem) but it is not
the case for Sturmian sets. In contrast, the uniformly recurrent
tree sets form a class of sets
containing the Sturmian sets and the regular interval exchange sets
which is closed under decoding by
a maximal bifix code (see~\cite{BertheDeFeliceDolceLeroyPerrinReutenauerRindone2013c}) 
and for which the Finite Index Basis Theorem is true.

 For each class, the array on the right of Figure~\ref{drawingClasses} 
 indicates
whether it satisfies the Cardinality Theorem ($CT$), the Return Words
Theorem ($RT$), the Finite Index Basis Theorem ($BT$)
or the Bifix Decoding Theorem ($BD$). All these classes
are distinct. 

The paper is organized as follows.
In Section~\ref{sectionNeutrality}, we introduce strong, weak and
neutral
sets. We prove the Cardinality Theorem in neutral sets
(Theorem~\ref{theoremCardinality}). We also prove a converse
 in the sense that a uniformly
recurrent set $S$ containing the alphabet and such that the
Cardinality Theorem holds for any finite $S$-maximal bifix code 
is  neutral (Theorem~\ref{theoremClelia}).

In Section~\ref{sectionTreePlanarTree}, we introduce 
acyclic and tree sets. The family of tree sets contains
 Sturmian sets and, as shown in~\cite{BertheDeFeliceDolceLeroyPerrinReutenauerRindone2014},  regular interval exchange sets.
We 
prove, as a main result, that  uniformly recurrent tree
 sets  satisfy the finite index property
(Theorem~\ref{newTheoremBasis}), a result
 which is proved in~\cite{BerstelDeFelicePerrinReutenauerRindone2012}
for a Sturmian set. The proof uses a result of~\cite{BertheDeFeliceDolceLeroyPerrinReutenauerRindone2013d}
concerning bifix codes in acyclic sets (Theorem 4.2 referred to as the
Saturation Theorem). It also uses the Return Words
Theorem proved in~\cite{BertheDeFeliceDolceLeroyPerrinReutenauerRindone2013d}.
We also prove a converse of Theorem~\ref{newTheoremBasis}, in the sense that
a uniformly recurrent set which has the finite index basis property
is a tree set (Corollary~\ref{corollaryConverseFiniteIndex}).
\paragraph{Ackowledgement} This work was supported by grants from
R\'egion \^{I}le-de-France, the ANR projects Eqinocs ANR-11-
BS02-004
and Dyna3S, ANR-13-BS02-003,
 the Labex Bezout,
the FARB Project
``Aspetti algebrici e computazionali nella teoria dei codici,
degli automi e dei linguaggi formali'' (University of Salerno, 2013)
and the MIUR PRIN 2010-2011 grant
``Automata and Formal Languages: Mathematical and Applicative Aspects''
H41J12000190001.
 We warmly thank the referee
for his useful remarks on the first version of the paper.

\section{Preliminaries}
In this section, we first recall some definitions concerning words, prefix codes
and bifix codes.
We give the definitions of recurrent and uniformly recurrent
sets of words. We also give the definitions and basic
properties of bifix codes (see~\cite{BerstelDeFelicePerrinReutenauerRindone2012} for a more detailed
presentation).
\subsection{Words}
In this section, we give definitions concerning extensions of words.
We define recurrent sets and sets of first return words.
For all undefined notions, we refer to~\cite{BerstelPerrinReutenauer2009}.
\subsubsection{Recurrent sets}
Let $A$ be a finite nonempty alphabet. All words considered below,
unless
stated explicitly, are supposed to be on the alphabet $A$.
We denote by $A^*$ the set of all words on $A$.
We denote by $1$ or by $\varepsilon$ the empty word. We refer to
\cite{BerstelPerrinReutenauer2009} for the notions of prefix, suffix,
factor of a word.

A set of words is said to be \emph{prefix-closed} (resp. \emph{factorial}) if it contains the
prefixes (resp. factors) of its elements.

Let $S$ be a set of words on the alphabet $A$.
 For $w\in S$,
we denote
\begin{eqnarray*}
L(w)&=&\{a\in A\mid aw\in S\},\\
R(w)&=&\{a\in A\mid wa\in S\},\\
E(w)&=&\{(a,b)\in A\times A\mid awb\in S\}
\end{eqnarray*}
and further
\begin{displaymath}
\ell(w)=\Card(L(w)),\quad r(w)=\Card(R(w)),\quad e(w)=\Card(E(w)).
\end{displaymath}
A word $w$ is \emph{right-extendable} if $r(w)>0$,
\emph{left-extendable} if $\ell(w)>0$ and \emph{biextendable} if
$e(w)>0$. A 
factorial set
$S$ is called \emph{right-extendable}
(resp. \emph{left-extendable}, resp. \emph{biextendable}) if every word in $S$ is
right-extendable (resp. left-extendable, resp. biextendable).

A word $w$ is called
\emph{right-special}
if $r(w)\ge 2$. It is called \emph{left-special} if $\ell(w)\ge 2$.
It is called \emph{bispecial} if it is both right and left-special.

A set of words $S\ne \{1\}$ is \emph{recurrent} if it is factorial and if for every
$u,w\in S$ there is a $v\in S$ such that $uvw\in S$. A recurrent set
 is biextendable.

A  set of words $S$ is said to be \emph{uniformly recurrent} if it is
right-extendable and if, for any word $u\in S$, there exists an integer $n\ge
1$
such that $u$ is a factor of every word of $S$ of length $n$.
A uniformly recurrent set is recurrent, and thus biextendable.

A \emph{morphism} $f:A^*\rightarrow B^*$ is a monoid morphism from
$A^*$ into $B^*$. If $a\in A$ is such that the word $f(a)$ begins with
$a$ and if $|f^n(a)|$ tends to infinity with $n$, there is a unique
infinite word denoted $f^\omega(a)$ which has all words $f^n(a)$
as prefixes. It is called a \emph{fixpoint} of the morphism $f$.

A morphism $f:A^*\rightarrow A^*$ is called \emph{primitive} if there
is an integer $k$ such that for all $a,b\in A$, the letter $b$
appears in $f^k(a)$. If $f$ is a primitive morphism, the set
of factors of any fixpoint
of $f$ is uniformly recurrent (see~\cite{PytheasFogg2002},
Proposition 1.2.3 for example).

A morphism $f:A^*\rightarrow B^*$ is \emph{trivial} if $f(a)=1$
for all $a\in A$. The image of a uniformly recurrent set
by a nontrivial morphism is uniformly recurrent (see
\cite{AlloucheShallit2003}, Theorem 10.8.6 and Exercise 10.11.38).

An infinite word is \emph{episturmian} if the set of its factors 
is closed under reversal and contains for each $n$ at most one word
of length $n$ which is right-special. It is a \emph{strict episturmian} word
if it has exactly one right-special word of each length and
moreover each right-special factor $u$ is such that $r(u)=\Card(A)$.


A \emph{Sturmian set} is a set of words which 
is the set of factors of a strict episturmian word.
Any Sturmian set is uniformly recurrent (see~\cite{BerstelDeFelicePerrinReutenauerRindone2012}).
\begin{example}\label{exampleFibonacci}
Let $A=\{a,b\}$.
The Fibonacci word is the fixpoint $x=f^\omega(a)=abaababa\ldots$ of the
morphism $f:A^*\rightarrow A^*$ defined by $f(a)=ab$ and $f(b)=a$.
It is a Sturmian word (see~\cite{Lothaire2002}). The set $F(x)$ of factors of $x$ is the
\emph{Fibonacci set}.
\end{example}
\begin{example}\label{exampleTribonacci}
Let $A=\{a,b,c\}$.
The Tribonacci word is the fixpoint $x=f^\omega(a)=abacaba\cdots$ of the morphism
$f:A^*\rightarrow A^*$ defined by $f(a)=ab$, $f(b)=ac$, $f(c)=a$.
It is a strict episturmian word (see~\cite{JustinVuillon2000}).
The set $F(x)$ of factors of $x$ is the \emph{Tribonacci set}.
\end{example}

\subsection{Bifix codes}
In this section, we present basic definitions concerning prefix codes
and bifix codes. For a more detailed presentation, 
see~\cite{BerstelPerrinReutenauer2009}. We also describe an
operation on bifix codes called internal 
transformation and prove a property
of this transformation (Proposition~\ref{propositionInternal}).
It will be used in Section~\ref{sectionConverseCardinality}.
\subsubsection{Prefix codes}
A \emph{prefix code} is a set of nonempty words which does not contain any
proper prefix of its elements. A suffix code is defined symmetrically.
A  \emph{bifix code} is a set which is both a prefix code and a suffix
code.

A \emph{coding morphism} for a prefix code $X\subset A^+$ is a morphism
$f:B^*\rightarrow A^*$ which maps bijectively $B$ onto $X$.

Let $S$ be a set of words. A prefix code $X\subset S$ is $S$-maximal 
if it is not properly contained in any prefix code 
$Y\subset S$. Note that if $X\subset S$ is an $S$-maximal prefix code, 
any word of $S$ is comparable for the prefix order with a word of $X$.

We denote by $X^*$ the submonoid generated by $X$.
A set $X\subset S$ is \emph{right $S$-complete} if any word of $S$
is a prefix of a word in $X^*$. Given a factorial set $S$,
a prefix code is $S$-maximal if
and only if it is right $S$-complete (Proposition 3.3.2
in~\cite{BerstelDeFelicePerrinReutenauerRindone2012}).

A \emph{parse} of a word $w$ with respect to a set $X$ is
a triple $(v,x,u)$ such that $w=vxu$ where $v$ has no suffix in $X$,
$u$ has no prefix in $X$ and $x\in X^*$.
We denote by $\delta_X(w)$ the number of parses of $w$ with respect to $X$.
Let $X$ be a prefix code. By Proposition 4.1.6 in~\cite{BerstelDeFelicePerrinReutenauerRindone2012},
for any $u\in A^*$ and $a\in A$, one has
\begin{equation}
\delta_X(ua)=\begin{cases}\delta_X(u)&\text{if $ua\in A^*X$},\\
                         \delta_X(u)+1&\text{otherwise}.
\end{cases}\label{eqparses}
\end{equation}

\subsubsection{Maximal bifix codes}
Let $S$ be a set of words.
A bifix code $X\subset S$ is $S$-maximal if it is
not properly contained in a bifix code $Y\subset S$.
For a recurrent set $S$, a finite bifix code is $S$-maximal as a bifix code if
and only if it is an $S$-maximal prefix code 
(see~\cite{BerstelDeFelicePerrinReutenauerRindone2012}, Theorem
4.2.2).

By definition, the $S$-\emph{degree} of a bifix code
 $X$, denoted $d_X(S)$, is the maximal number
of parses of a word in $S$.  It can be finite or infinite.

For $S=A^*$, we use the term `maximal bifix code' instead of $A^*$-maximal
bifix code and `degree' instead of $A^*$-degree. This is consistent
with the terminology of~\cite{BerstelPerrinReutenauer2009}.

Let $X$ be a bifix code.
The number of parses of a word $w$ is also equal to the number
of suffixes of $w$ which have no prefix in $X$ and the
 number of prefixes of $w$ which have no suffix in $X$
(see Proposition 6.1.6 in~\cite{BerstelPerrinReutenauer2009}).

The set of \emph{internal factors} of a set of words $X$,
denoted $I(X)$,
is the set of words $w$ such that 
there exist nonempty words $u,v$ with $uwv\in X$.

Let $S$ be a set of words. A set $X\subset S$ is said to be
$S$-\emph{thin} if there is a word of $S$ which is not a factor of
$X$. If $S$ is biextendable any finite set $X\subset S$ is $S$-thin.
Indeed, any long enough word of $S$ is not a factor of $X$.
The converse is true if $S$ is uniformly recurrent. Indeed, let
$w\in S$ be a word which is not a factor of $X$. Then any long
enough word of $S$ contains $w$ as a factor, and thus is not itself a factor
of $X$.

Let $S$ be a recurrent set and let
 $X$ be an $S$-thin and $S$-maximal bifix code of $S$-degree $d$.
A word $w\in S$ is such that
$\delta_X(w)< d$ if and only if it is an internal factor of $X$,
that is,
\begin{displaymath}
I(X)=\{w\in S\mid \delta_X(w)<d\}
\end{displaymath}
(Theorem 4.2.8 in~\cite{BerstelDeFelicePerrinReutenauerRindone2012}).
Thus any word of $S$ which is not a factor of $X$ has $d$ parses.
This implies that
the $S$-degree $d$ is finite.
\begin{example}\label{exampleUniform}
Let $S$ be a recurrent set. For any integer $n\ge 1$, the set
$S\cap A^n$ is an $S$-maximal bifix code of $S$-degree $n$.
\end{example}
The \emph{kernel} of a bifix code $X$ is the set $K(X)=I(X)\cap X$.
Thus it is the set of words of $X$ which are also internal factors of
$X$.
By Theorem 4.3.11
of~\cite{BerstelDeFelicePerrinReutenauerRindone2012},
 an $S$-thin and
$S$-maximal bifix code is determined by its $S$-degree and its kernel.
Moreover, by Theorem 4.3.12 
of~\cite{BerstelDeFelicePerrinReutenauerRindone2012}, we have the
following result.
\begin{theorem}\label{theoremKernel}
Let $S$ be a recurrent set. A bifix code $Y\subset S$ is the kernel
of some $S$-thin $S$-maximal bifix code of $S$-degree $d$ if and only if
$Y$ is not $S$-maximal and $\delta_Y(y)\le d-1$ for all $y\in Y$.
\end{theorem}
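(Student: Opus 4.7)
The plan is to prove each direction of the equivalence separately. The forward direction turns on the pointwise equality $\delta_Y(y) = \delta_X(y)$ for $y \in Y$, while the backward direction requires an explicit construction of $X$ from $Y$.

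For the forward direction, assume $Y = K(X) = I(X) \cap X$ with $X$ an $S$-thin $S$-maximal bifix code of $S$-degree $d$. Using the description of $\delta_C(w)$ as the number of prefixes of $w$ with no suffix in $C$, the inclusion $Y \subset X$ gives $\delta_X(y) \le \delta_Y(y)$ directly. For the reverse inequality, I would show that whenever $p$ is a prefix of $y$ with a suffix $s \in X$, one already has $s \in Y$: indeed, prefix-freeness of $X$ forbids $s = p$ when $p$ is a proper prefix of $y \in X$, so $s$ is a proper suffix of $p$; writing $y = tsr$ then forces $t$ nonempty (since $s$ is a proper suffix of $p$) and $r$ nonempty (since $p$ is a proper prefix of $y$), hence $s \in I(X) \cap X = Y$. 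Combined with $y \in I(X)$ and the characterization $I(X) = \{w \in S : \delta_X(w) < d\}$ recalled just above, this yields $\delta_Y(y) = \delta_X(y) \le d - 1$. For the non-maximality of $Y$, an $S$-thin bifix code of finite $S$-degree is finite, so taking $x \in X$ of maximum length, no nonempty $u, v$ can satisfy $uxv \in X$; hence $x \in X \setminus I(X) = X \setminus Y$, so $Y \subsetneq X$ and $Y$ is not $S$-maximal as a bifix code.

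For the backward direction, given $Y$ satisfying both conditions, I would construct $X = Y \cup Z$, where $Z$ collects those $w \in S$ with $\delta_Y(w) = d$ that are minimal in both the prefix and suffix orders with this property, and that are prefix-incomparable and suffix-incomparable with every $y \in Y$. The bound $\delta_Y(y) \le d - 1$ keeps $Y$ strictly below the threshold occupied by $Z$, the non-$S$-maximality of $Y$ (equivalently, non-right-$S$-completeness) guarantees that $Z$ is nonempty, and minimality together with incomparability make $X$ bifix. One then verifies in turn that $X$ is $S$-maximal of $S$-degree $d$ (propagating parse counts via (\ref{eqparses}) and its symmetric analog), that every $y \in Y$ is an internal factor of some $z \in Z$ (using recurrence of $S$ to extend $y$ until $\delta_Y$ first reaches $d$ on both sides), and that no $z \in Z$ is an internal factor of $X$ (by minimality of its choice), giving $K(X) = Y$.

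The main obstacle I anticipate lies in the backward direction, precisely the verification that the constructed $X = Y \cup Z$ has kernel exactly $Y$; the delicate step is showing $Y \subset I(X)$, which for each $y \in Y$ requires producing nonempty $u, v$ with $uyv \in Z$. This rests on combining the recurrence of $S$, the non-maximality of $Y$, and the parse-growth recursion (\ref{eqparses}), to ensure that extensions of $y$ eventually attain $\delta_Y = d$ and that the first such extensions on each side fall into the minimal class defining $Z$.
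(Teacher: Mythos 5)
First, note that the paper does not prove this statement at all: it is quoted from Theorem 4.3.12 of~\cite{BerstelDeFelicePerrinReutenauerRindone2012}, so your attempt has to be judged on its own. In the forward direction your identification $\delta_Y(y)=\delta_X(y)$ for $y\in Y=K(X)$ is correct and is the standard argument, but the non-maximality step rests on the claim that an $S$-thin $S$-maximal bifix code of finite $S$-degree is finite. That claim is false under the theorem's hypothesis ($S$ merely recurrent): $S=A^*$ is recurrent and $X'=a\cup ba^*b$ (the paper's own example) is an infinite $S$-thin maximal bifix code of degree $2$. The conclusion that some codeword fails to be an internal factor, hence $K(X)\subsetneq X$, is still true, but for infinite $X$ it needs a genuine argument using $S$-thinness and right/left $S$-completeness rather than a longest codeword; as written this is a gap, though a localized and repairable one.

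The backward direction, however, is broken, not merely unverified. Test your construction on the paper's Example~\ref{exampleKernel}: $S$ the Fibonacci set (or even $S=A^*$), $Y=\{a\}$, $d=2$. Here $\delta_Y(w)=1+|w|_b$, so the words with $\delta_Y(w)=2$ that are minimal in both the prefix and suffix orders and incomparable with $a$ reduce to $Z=\{b\}$, giving $X=\{a,b\}$, which is an $S$-maximal bifix code of $S$-degree $1$ with kernel $\emptyset$, not a code of degree $2$ with kernel $\{a\}$; the correct completion is $\{a,bab,baab\}$ (resp.\ $a\cup ba^*b$). The root of the failure is the candidate pool itself: since $Y\subset X$ forces $\delta_X\le\delta_Y$, the new codewords $z$ must satisfy $\delta_X(z)=d$ but typically have $\delta_Y(z)>d$ (here $\delta_{\{a\}}(bab)=3$), so requiring $\delta_Y(w)=d$ together with prefix/suffix minimality and incomparability with $Y$ excludes precisely the words $uyv$ needed to make every $y\in Y$ an internal factor. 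Thus the step you yourself flagged as delicate, $Y\subset I(X)$, is not a verification issue but is false for the $X$ you build; the known proofs construct $X$ differently (so that the new codewords have exactly $d$ parses with respect to the code being built, not with respect to $Y$), and that idea is missing here.
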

\begin{example}\label{exampleKernel}
Let $S$ be the Fibonacci set. The set $Y=\{a\}$ is a bifix code which is not
$S$-maximal and $\delta_Y(a)=1$. The set $X=\{a,baab,bab\}$ is the unique
$S$-maximal bifix code of $S$-degree $2$ with kernel $\{a\}$. Indeed,
the word $bab$ is not an internal factor and has two parses,
namely $(1,bab,1)$ and $(b,a,b)$.
\end{example}
The following proposition allows one to embed an $S$-maximal
bifix code in a maximal one of the same degree.
\begin{proposition}\label{propositionCompletion}
Let $S$ be a recurrent set.
For any $S$-thin and
 $S$-maximal bifix code $X$ of $S$-degree $d$, there is a thin
maximal bifix code $X'$ of degree $d$ such that $X=X'\cap S$.
\end{proposition}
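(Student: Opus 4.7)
The plan is to construct $X'$ via the shared kernel. Set $K = K(X) = I(X) \cap X$. By Theorem~\ref{theoremKernel} applied to the recurrent set $S$, the hypotheses on $X$ imply that $K$ is a bifix code, it is not $S$-maximal, and satisfies $\delta_K(y) \le d-1$ for every $y \in K$. Crucially, the parse count $\delta_K$ is a purely combinatorial quantity depending only on $K$ and not on the ambient set.

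I would then apply Theorem~\ref{theoremKernel} to $A^*$, which is trivially recurrent. The three requirements are verified: $K \subset A^*$ is a bifix code (immediate); $K$ is not $A^*$-maximal (any proper bifix extension of $K$ inside $S$ is a fortiori a proper bifix extension inside $A^*$); and $\delta_K(y) \le d-1$ for $y \in K$ (the same numerical bound, by combinatorial invariance). The theorem produces a thin maximal bifix code $X'$ in $A^*$ of degree $d$ with kernel $K$.

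It remains to show $X = X' \cap S$. Since $X$ is $S$-maximal as a bifix code in $S$ and $X' \cap S$ is a bifix code in $S$ (being a subset of the bifix code $X'$), it suffices to prove the inclusion $X \subset X'$: then $X \subset X' \cap S$ as a bifix subcode, and $S$-maximality of $X$ forces equality. For $x \in K$ the inclusion is immediate since $K \subset X'$. For $x \in X \setminus K$, I would argue via the uniqueness of maximal bifix codes with prescribed kernel and degree (Theorem~4.3.11 of~\cite{BerstelDeFelicePerrinReutenauerRindone2012}): by Theorem~4.2.8 of the same reference, $\delta_X(x) = d$, and the monotonicity formula~(\ref{eqparses}) for $\delta_K$ together with the inclusion $K \subset X \subset A^*$ should force $x$ to be a minimal word outside $K$ at which $\delta_{X'}$ attains $d$, while having no proper prefix or suffix in $X'$; hence $x$ must lie in the $A^*$-maximal completion $X'$.

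The main obstacle is the last verification: transferring the information that $x \in X \setminus K$ is $S$-maximally placed (not an internal factor of $X$, full count $\delta_X(x) = d$) into $A^*$-level information about the completion $X'$. The bridge between the two codes is the common kernel $K$ and the fact that the parse count $\delta_K$ does not depend on the ambient set; the subtlety is that parse counts with respect to $X$ or to $X'$ do not agree directly, so one must use the structure theorem for maximal bifix codes via their kernel rather than a naive comparison.
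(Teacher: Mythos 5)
Your construction of $X'$ coincides with the paper's: pass to the kernel $K=I(X)\cap X$, apply Theorem~\ref{theoremKernel} in $S$ to get that $K$ is not $S$-maximal (hence not $A^*$-maximal, as you correctly note) with $\delta_K(y)\le d-1$, then apply the theorem again with $S=A^*$ to obtain a thin maximal bifix code $X'$ of degree $d$ with kernel $K$. Your reduction of $X=X'\cap S$ to the single inclusion $X\subset X'$ is also sound (and even bypasses the paper's appeal to its Theorem 4.2.11). But that inclusion is the heart of the proposition, you explicitly leave it as ``the main obstacle'', and the criterion you sketch for it does not work. From ``$\delta_{X'}(x)=d$, $x$ is a minimal word outside $K$ with this property, and $x$ has no proper prefix or suffix in $X'$'' one cannot conclude $x\in X'$: take $X'=A^d$, whose kernel is empty; every word of length $d-1$ satisfies all these conditions and is not in $X'$. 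Moreover the assertion that $x\in X\setminus K$ has no proper prefix or suffix in $X'$ is itself unproved, and proving it is essentially the same difficulty as the inclusion you want.

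What is actually needed (and what the paper does, in the equivalent form ``$X\cup X'$ is a prefix code'') is a parse-transfer argument through the common kernel at the penultimate prefix, carried out along a word of $X'$ comparable with $x$. Write $x=pa$ with $x\in X\setminus K$; then $\delta_X(x)=d$, hence $\delta_X(p)=d$ by Equation~\eqref{eqparses}; every factor of $p$ lying in $X$ is in $K$ (it cannot be a prefix of $x$ since $X$ is prefix, so it is internal), giving $\delta_X(p)=\delta_K(p)$. The delicate step is the analogous equality $\delta_K(p)=\delta_{X'}(p)$: to see that a factor of $p$ lying in $X'$ is internal in $X'$ (hence in $K$), one places $x$ as a prefix of some $x'\in X'$ --- this is where comparability with $X'$ enters, and it is available because a thin maximal bifix code is a maximal prefix code. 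Then $\delta_{X'}(p)=d$, so $\delta_{X'}(x)=d$, and Equation~\eqref{eqparses} forces $x\in A^*X'$, i.e.\ $x$ has a suffix $z\in X'$; finally $z=x$, since a proper suffix $z$ would be internal in $X'$, hence in $K\subset X$, contradicting that $X$ is a suffix code. None of this mechanism --- introducing the comparable word of $X'$, transferring $\delta$ through $K$ at $p$, and eliminating a proper suffix $z$ --- appears in your outline, so the proof as proposed has a genuine gap at its decisive step.
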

\begin{proof}
Let $K$ be the kernel of $X$ and let $d$ be the $S$-degree of $X$.
By Theorem~\ref{theoremKernel}, the set $K$ is not $S$-maximal and
$\delta_K(y)\le d-1$ for any $y\in K$.
Thus, applying again
 Theorem~\ref{theoremKernel} with $S=A^*$, 
there is a maximal bifix code $X'$ with kernel $K$ and degree $d$.
Then, by Theorem 4.2.11
of~\cite{BerstelDeFelicePerrinReutenauerRindone2012},
the set $X'\cap S$ is an $S$-maximal bifix code.

Let us show that $X\cup X'$ is prefix. Suppose that
 $x\in X$ and $x'\in  X'$ are comparable for
the prefix order. We may assume  that $x$ is a  prefix 
of $x'$ (the other case works symmetrically).
 If $x\in K$, then $x\in X'$ and thus $x=x'$. Otherwise,
$\delta_X(x)=d$.
Set $x=pa$ with $a\in A$. Then, by Equation~\eqref{eqparses}, 
$\delta_X(x)=\delta_X(p)$ and thus $\delta_X(p)=d$. But since
all the factors of $p$ which are in $X$ are in $K$, we have 
$\delta_X(p)=\delta_K(p)$. Analogously, since all factors
of $p$ which are in $X'$ are in $K$, we have
$\delta_K(p)=\delta_{X'}(p)$. Therefore $\delta_{X'}(p)=d$.
But, since $X'$ has degree $d$, $\delta_{X'}(x)\le d$.
Then, by Equation~\eqref{eqparses} again, we have $\delta_{X'}(x)=d$
and $x\in A^*X'$.
Let $z$ be the suffix of $x$ which is in $X'$. If $x\ne x'$, then
 $z=x$ or  $z\in K$ 
and in both cases $z\in X$. Since $X'$ is prefix and $X$
is suffix, this implies
 $z=x=x'$. 

Since $X$ and $X'\cap S$ are $S$-maximal prefix codes included in
$(X\cup X')\cap S$, 
this implies
that $X=X'\cap S$.
\end{proof}
\begin{example}
Let $S$ be the Fibonacci set. Let $X=\{a,baab,bab\}$ be the $S$-maximal
bifix code of $S$-degree $2$ with kernel $\{a\}$. Then
$X'=a\cup ba^*b$ is the  maximal bifix code with kernel $\{a\}$ of degree $2$
such that $X'\cap S=X$.
\end{example}
\subsubsection{Internal transformation}
We will use the following transformation which operates on bifix
codes (see~\cite[Chapter 6]{BerstelPerrinReutenauer2009} 
for a more detailed presentation). For a set of words $X$ and a word $u$, we denote
$u^{-1}X=\{v\in A^*\mid uv\in X\}$ and $Xu^{-1}=\{v\in A^*\mid vu\in
X\}$
the \emph{residuals} of $X$ with respect to $u$ (one should not
confuse this notation with that of the inverse in the free group). 
Let $X\subset S$ be a set of words and $w\in S$ a word. Let
\begin{eqnarray}
G=Xw^{-1},\quad &D=w^{-1}X,\\
G_0=(wD)w^{-1}&D_0=w^{-1}(Gw),\label{eqG_0}\\
G_1=G\setminus G_0,&D_1=D\setminus D_0.\label{eqGD}
\end{eqnarray}
Note that $Gw\cap wD=G_0w=wD_0$. Consequently $G_0^*w=wD_0^*$.
The set
\begin{equation}
Y=(X\cup w\cup(G_1wD_0^*D_1\cap S))\setminus (Gw\cup wD)\label{eqGenInternal}
\end{equation}
is said to be obtained from $X$ by \emph{internal transformation}
with respect to $w$. When $Gw\cap wD=\emptyset$, the transformation
takes the simpler form
\begin{equation}
Y=(X\cup w\cup(GwD\cap S))\setminus (Gw\cup wD).\label{eqInternal}
\end{equation}
It is this form which is used in~\cite{BerstelDeFelicePerrinReutenauerRindone2012}
to define the internal transformation.
\begin{example}\label{exampleFiboDegre2}
Let $S$ be the Fibonacci set. Let $X=S\cap A^2$. The internal
transformation applied to
 $X$ with respect to $b$ gives $Y=\{aa,aba,b\}$. The 
internal
transformation applied to $X$ with respect to $a$ gives
$Y'=\{a,baab,bab\}$.
\end{example}
The following result is proved in~\cite{BerstelDeFelicePerrinReutenauerRindone2012} in the case $G_0=\emptyset$
(Proposition 4.4.5).
\begin{proposition}\label{propositionInternal}
Let $S$ be a uniformly recurrent set and
 let $X\subset S$ be a finite $S$-maximal
bifix code of $S$-degree $d$. Let $w\in S$ be a nonempty word
such that the sets
$G_1,D_1$ defined by Equation~\eqref{eqGD} are nonempty.
Then the set $Y$ obtained as in Equation~\eqref{eqGenInternal} is
a finite $S$-maximal bifix code with $S$-degree at most $d$.
\end{proposition}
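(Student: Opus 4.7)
The approach is to generalize the proof of Proposition 4.4.5 in~\cite{BerstelDeFelicePerrinReutenauerRindone2012}, which handles the case $G_0=\emptyset$. The structural identity $wD_0=G_0w$ is the central tool: each $d\in D_0$ is paired with a unique $g\in G_0$ via $wd=gw$, so any word $u=gwd_1\cdots d_nd\in G_1wD_0^*D_1$ can be rewritten by iterative sliding as $gg_1\cdots g_nwd$ with each $g_i\in G_0$. In particular $gw$ is an $X$-prefix of $u$ and $wd$ an $X$-suffix. A preliminary observation used throughout is that the hypothesis $G_1,D_1\ne\emptyset$ combined with $X$ being bifix forces $w$ to have no proper prefix and no proper suffix in $X$: indeed a proper $X$-prefix $x$ of $w$ together with some $wd\in wD$ (which exists as $D_1\ne\emptyset$) would place $x\in X$ as a proper prefix of $wd\in X$, contradicting that $X$ is prefix.

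For the bifix property of $Y$ I would argue by case analysis on pairs of distinct elements. Two useful facts: (a) $\varepsilon\notin G_1$ (since $\varepsilon\in G$ would force $w\in X$, hence $\varepsilon\in G_0$), and symmetrically $\varepsilon\notin D_1$; (b) for $g\in G$, one has $g\in G_0$ if and only if $gw$ starts with $w$, since $wD$ is exactly the set of elements of $X$ beginning with $w$. The prefix property then reduces to: (i) two elements of $X\setminus(Gw\cup wD)$ are prefix-incomparable by $X$; (ii) $w$ versus a surviving $X$-word is settled by the preliminary observation and by the removal of $wD$; (iii) $w$ versus $gwud\in G_1wD_0^*D_1$ uses fact (b), which prevents $w$ from being a prefix of $gw\cdots$; (iv) a surviving $X$-word versus an element $gwud$ is ruled out because $gw\in X$ would either coincide with it (placing it in $Gw$, contradiction) or be prefix-incomparable with it by $X$ prefix; (v) two distinct elements $g_1wu_1d_1,g_2wu_2d_2$ are handled via the leading $g_iw\in X$ when $g_1\ne g_2$, and when $g_1=g_2$ by observing that $u_jd_j\in D_0^*D_1$ with $D_0\cap D_1=\emptyset$, so a comparison of lengths of $u_1d_1$ and $u_2d_2$ would force some $d_i\in D_1$ to appear as a letter of $u_j\in D_0^*$, a contradiction. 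The suffix property follows by the symmetric argument, exchanging the roles of $G$ and $D$.

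For $S$-maximality, every $s\in S$ is prefix-comparable with some $x\in X$; if $x\notin Gw\cup wD$ we are done, while if $x\in wD$ we iterate the sliding identity through successive occurrences of $w$, using uniform recurrence to guarantee the needed extensions remain in $S$, eventually terminating either at $w$ itself or at an element of $G_1wD_0^*D_1\cap S\subseteq Y$. For the degree bound $d_Y(S)\le d$, I would set up an injection from the parses of a word $u\in S$ with respect to $Y$ into its parses with respect to $X$, realigning cut points via the sliding identity. Finiteness of $Y$ then follows: as an $S$-maximal bifix code of finite $S$-degree in a recurrent set, $Y$ is $S$-thin by Theorem 4.2.8 of~\cite{BerstelDeFelicePerrinReutenauerRindone2012}, and in a uniformly recurrent set every $S$-thin subset of $S$ is finite. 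The main obstacle I anticipate is the parse injection: the iteration $D_0^*$ produces words of $Y$ with many internal occurrences of $w$, so $Y$-parses may cut at several different $w$-positions, and one must carefully track the sliding rewrite to exhibit the injection into distinct $X$-parses.
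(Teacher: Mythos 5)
Your verification that $Y$ is a bifix code is essentially sound (one small repair: in case (v) you need the fact that $D=w^{-1}X$ is a prefix code, being a residual of the prefix code $X$, so that the $D_0^*D_1$-factorizations of comparable words align; ``comparison of lengths'' alone does not give this), and the symmetry via $G_1wD_0^*D_1=G_1G_0^*wD_1$ is correct. But the two claims that carry the content of the proposition --- $S$-maximality of $Y$ and the bound $d_Y(S)\le d$ --- are not proved. For maximality, your sketch (``iterate the sliding identity through successive occurrences of $w$, using uniform recurrence'') does not say what is iterated or why it terminates: when the word of $X$ prefix-comparable with $s\in S$ lies in $Gw\cup wD$, you must exhibit an element of $\{w\}\cup(G_1wD_0^*D_1\cap S)$ still comparable with $s$ (equivalently, show $s$ is a prefix of a word of $Y^*$), and nothing rules out the continuation falling into $wD_0$ at every step, nor shows that the required products remain in $S$; uniform recurrence by itself does not supply these extensions. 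For the degree, you explicitly leave the parse injection as an ``anticipated obstacle'', i.e.\ it is not constructed; this is precisely the delicate point --- note that the statement only claims $S$-degree \emph{at most} $d$, and the paper remarks after the proposition that it is not known whether the degree can drop when $G_0\ne\emptyset$, so this step cannot be waved through. Finally, the finiteness step miscites Theorem 4.2.8 of~\cite{BerstelDeFelicePerrinReutenauerRindone2012}: that theorem \emph{assumes} the code is $S$-thin, it does not derive $S$-thinness from finiteness of the $S$-degree, so as written this step is circular.

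For comparison, the paper sidesteps the whole direct verification: by Proposition~\ref{propositionCompletion} one writes $X=X'\cap S$ with $X'$ a thin maximal bifix code of degree $d$ of $A^*$, performs the internal transformation on $X'$ in the free monoid, where the preservation of maximality, thinness and degree (with $G'_1,D'_1\ne\emptyset$) is already known (Proposition 6.2.8 and its complement in~\cite{BerstelPerrinReutenauer2009}), checks $Y=Y'\cap S$, and then applies Theorem 4.2.11 of~\cite{BerstelDeFelicePerrinReutenauerRindone2012} to get that $Y$ is $S$-maximal of $S$-degree at most $d$, finiteness following from uniform recurrence. If you wish to keep your direct route, the missing work is exactly a relativized-to-$S$ version of that result of~\cite{BerstelPerrinReutenauer2009}; as it stands, the proposal proves bifixness but not the proposition.
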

\begin{proof}
By Proposition~\ref{propositionCompletion} there is a thin maximal bifix
code $X'$ of degree $d$ such that $X=X'\cap S$. Let $Y'$ be
the code obtained from $X'$ by internal transformation with
respect to $w$. Then
\begin{displaymath}
Y'=(X'\cup w\cup(G'_1w{D'_0}^*D'_1))\setminus (G'w\cup wD')
\end{displaymath}
with $G'=X'w^{-1}$, $D'=w^{-1}X'$, and
$G'_0=(wD')w^{-1}$, $D'_0=w^{-1}(G'w)$,
$G'_1=G'\setminus G'_0$, $D'_1=D'\setminus D'_0$. 
We have $G=G'\cap Sw^{-1}$, $D=D'\cap w^{-1}S$, and $D_i=D'_i\cap w^{-1}S$,
 $G_i=G'_i\cap Sw^{-1}$ for $i=0,1$. In particular $G_1\subset G'_1$,
$D_1\subset D'_1$. Thus $G'_1,D'_1\ne\emptyset$. This implies that
$Y'$ is a thin maximal bifix code of degree $d$ (see Proposition 6.2.8
and its complement page 242
in \cite{BerstelPerrinReutenauer2009}).

Since $w\in S$, we have $Y=Y'\cap S$. By Theorem 4.2.11
of~\cite{BerstelDeFelicePerrinReutenauerRindone2012},
$Y$ is an $S$-maximal bifix code of $S$-degree at most
$d$. Since $S$ is uniformly
recurrent,
this implies that $Y$ is finite.
\end{proof}
When $G_0=\emptyset$, the bifix code $Y$ has $S$-degree $d$ (see~\cite[Proposition 4.4.5]{BerstelDeFelicePerrinReutenauerRindone2012}). We will see in the proof of
Theorem~\ref{theoremClelia} another case where it is true. We have no example
where it is not true.
\begin{example}
Let $S$ be the Fibonacci set, as in Example~\ref{exampleFiboDegre2}. Let $X=S\cap A^2$
 and let $w=a$. Then
$Y=\{a,baab,bab\}$ is the $S$-maximal bifix code of $S$-degree $2$
already considered in  Example~\ref{exampleFiboDegre2}.
\end{example}
\section{Strong, weak and neutral sets}\label{sectionNeutrality}

In this section, we introduce strong, weak and neutral sets. We prove a theorem
concerning
the cardinality of an $S$-maximal bifix code in a neutral set $S$
(Theorem~\ref{theoremCardinality}). 

\subsection{Strong, weak and neutral words}
Let $S$ be a factorial set.
For a word $w\in S$, let
\begin{displaymath}
m(w)=e(w)-\ell(w)-r(w)+1.
\end{displaymath}
We say that, with respect to
 $S$,  $w$ is \emph{strong} if $m(w)>0$,
\emph{weak} if $m(w)<0$ and \emph{neutral}  if $m(w)=0$.

A biextendable word $w$ is called \emph{ordinary} if $E(w)\subset a\times
A\cup A\times b$ for some $(a,b)\in E(w)$ (see~\cite[Chapter 4]{BertheRigo2010}).
If $S$ is biextendable, any ordinary word is neutral. Indeed, one has
$E(w)=(a\times (R(w)\setminus b))\cup ((L(w)\setminus a)\times b)\cup (a,b)$
and thus $e(w)=\ell(w)+r(w)-1$.

\begin{example}\label{exSturmianIsOrdinary}
 In a Sturmian set, any word is ordinary. Indeed, for
any bispecial word $w$, there is a unique letter $a$ such that
$aw$ is right-special and a unique letter $b$ such that $wb$
is left-special. Then $awb\in S$ and $E(w)=a\times A\cup A\times b$.
\end{example}
We say that a set of words $S$ is \emph{strong} (resp. \emph{weak},
resp. \emph{neutral}) if it is
factorial
and  every  word
$w\in S$ is strong or neutral (resp. weak or neutral, resp. neutral). 

The sequence $(p_n)_{n\ge 0}$ with $p_n=\Card(S\cap A^n)$ is called the
\emph{complexity} of $S$. Set $k=\Card(S\cap A)-1$.
\begin{proposition}\label{propComplexityNeutral}
The complexity of a strong (resp. weak, resp. neutral) set $S$
is 
at least  (resp. at most, resp. exactly) equal to $kn+1$.
\end{proposition}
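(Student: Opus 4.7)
The plan is to control the \emph{second difference} $p_{n+2}-2p_{n+1}+p_n$ of the complexity by a sum of the local quantities $m(u)$, and then integrate twice starting from the initial values $p_0=1$, $p_1=k+1$.

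First I would establish, using only factoriality of $S$, the three counting identities
\begin{equation*}
p_{n+1}=\sum_{u\in S\cap A^n}r(u)=\sum_{u\in S\cap A^n}\ell(u),\qquad p_{n+2}=\sum_{u\in S\cap A^n}e(u).
\end{equation*}
Indeed, every word of length $n+1$ in $S$ can be written uniquely as $ua$ (resp.\ as $au$) with $u\in A^n$ and $a\in A$, and lies in $S$ iff $u\in S\cap A^n$ and $a\in R(u)$ (resp.\ $a\in L(u)$); similarly every word of length $n+2$ is uniquely $aub$ with $u\in A^n$, and lies in $S$ iff $(a,b)\in E(u)$.

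Subtracting these identities and using $p_n=\sum_{u\in S\cap A^n}1$ yields
\begin{equation*}
p_{n+2}-2p_{n+1}+p_n=\sum_{u\in S\cap A^n}\bigl(e(u)-\ell(u)-r(u)+1\bigr)=\sum_{u\in S\cap A^n}m(u).
\end{equation*}
Set $b_n:=p_{n+1}-p_n$. Then $b_{n+1}-b_n=\sum_{u\in S\cap A^n}m(u)$, and $b_0=p_1-p_0=(k+1)-1=k$.

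The three cases now fall out by monotonicity. If $S$ is neutral, each $m(u)=0$, so $b_n$ is constant equal to $k$ and $p_n=1+kn$. If $S$ is strong, each $m(u)\ge 0$, so $b_n$ is non-decreasing, giving $b_n\ge b_0=k$ and therefore $p_n\ge 1+kn$. If $S$ is weak, each $m(u)\le 0$, so $b_n$ is non-increasing, giving $b_n\le k$ and $p_n\le 1+kn$. There is no serious obstacle here—the whole argument is bookkeeping around the identity $p_{n+2}-2p_{n+1}+p_n=\sum_u m(u)$; the only point to be mildly careful about is the initial value, which relies on the convention $p_0=1$ and on $k+1=\Card(S\cap A)$ as prescribed by the statement.
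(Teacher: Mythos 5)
Your proof is correct and is essentially the paper's argument: the paper invokes Cassaigne's identity (its Lemma~\ref{lemmaEnum}) stating that the second difference of the complexity equals $\sum_{w\in S\cap A^n}m(w)$, and then concludes exactly as you do, by monotonicity of the first difference starting from the value $k$ at $n=0$ (its Lemma~\ref{lemmasn}). The only difference is that you prove the counting identities directly rather than citing them, which is harmless.
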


Given a factorial set $S$ with complexity $p_n$, we denote
$s_n=p_{n+1}-p_n$ the first difference of the sequence $p_n$
and $b_n=s_{n+1}-s_n$ its second difference.
The following is from~\cite{Cassaigne1997}
(it is also part of Theorem 4.5.4 in~\cite[Chapter 4]{BertheRigo2010}
and also Lemma 3.3
in~\cite{BertheDeFeliceDolceLeroyPerrinReutenauerRindone2013d}).
\begin{lemma}\label{lemmaEnum}
We have 
\begin{displaymath}
b_n=\sum_{w\in A^n\cap S}m(w)\quad \text{ and } \quad s_n=\sum_{w\in A^n\cap
  S}(r(w)-1)
\end{displaymath}
for all $n\ge 0$.
\end{lemma}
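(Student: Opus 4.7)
The plan is to prove both identities by straightforward double counting, with the key observation that because $S$ is factorial, every word of length $n+1$ in $S$ uniquely decomposes as $wa$ (right extension) and as $aw$ (left extension), where $w$ has length $n$.

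First I would tackle the second identity, $s_n = \sum_{w \in S \cap A^n}(r(w)-1)$. Since $S$ is factorial, the map $(w,a) \mapsto wa$ gives a bijection between $\{(w,a) \in (S \cap A^n) \times A : wa \in S\}$ and $S \cap A^{n+1}$. Summing over the first coordinate counts $r(w)$ letters for each $w$, so $p_{n+1} = \sum_{w \in S \cap A^n} r(w)$, and subtracting $p_n = \sum_{w \in S \cap A^n} 1$ yields the claim. By the symmetric argument using left extensions I would also record the dual formula
\[
s_n = \sum_{w \in S \cap A^n}(\ell(w) - 1),
\]
which will be convenient in the next step.

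For the first identity, $b_n = \sum_{w \in S \cap A^n} m(w)$, the idea is to compute $s_{n+1}$ by grouping words of length $n+1$ according to their length-$n$ suffix. Writing each $u \in S \cap A^{n+1}$ as $u = aw$ with $a \in L(w)$ and $w \in S \cap A^n$, I obtain
\[
s_{n+1} = \sum_{u \in S \cap A^{n+1}}(r(u) - 1) = \sum_{w \in S \cap A^n} \sum_{a \in L(w)} (r(aw) - 1).
\]
The inner sum evaluates cleanly: $\sum_{a \in L(w)} r(aw)$ counts pairs $(a,b)$ with $awb \in S$, which is exactly $e(w)$, while $\sum_{a \in L(w)} 1 = \ell(w)$. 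Thus $s_{n+1} = \sum_{w \in S \cap A^n}(e(w) - \ell(w))$. Subtracting the formula $s_n = \sum_{w \in S \cap A^n}(r(w) - 1)$ obtained above gives
\[
b_n = s_{n+1} - s_n = \sum_{w \in S \cap A^n}\bigl(e(w) - \ell(w) - r(w) + 1\bigr) = \sum_{w \in S \cap A^n} m(w),
\]
as required.

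There is no real obstacle here: the whole proof is bookkeeping, and the only subtlety is to use two different decompositions of length-$(n+1)$ words (prefix-extension for $s_n$, suffix-extension for $s_{n+1}$) so that the difference telescopes into $e(w) - \ell(w) - r(w) + 1$. Factoriality of $S$ is used implicitly at each step to guarantee that the relevant shorter words lie in $S$, so the grouping partitions each $S \cap A^{n+k}$ without loss.
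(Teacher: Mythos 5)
Your proof is correct, and it is the standard double-counting argument: the paper itself gives no proof of this lemma, citing Cassaigne (1997) and the other references instead, and your computation (grouping length-$(n+1)$ words once by their length-$n$ prefix to get $s_n=\sum_{w}(r(w)-1)$, and once by their length-$n$ suffix to get $s_{n+1}=\sum_{w}(e(w)-\ell(w))$, then subtracting) is exactly the classical argument behind those references. The only cosmetic remark is that the dual formula $s_n=\sum_w(\ell(w)-1)$ you record is never actually used in your final subtraction, which already combines the $r$-based formula for $s_n$ with the $e,\ell$-based formula for $s_{n+1}$; factoriality of $S$ is indeed the only hypothesis needed, as you note.
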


Proposition~\ref{propComplexityNeutral} follows easily from
the following lemma.
\begin{lemma}\label{lemmasn}
If $S$ is strong (resp. weak, resp. neutral), then $s_n\ge k$
(resp. $s_n\le k$, resp. $s_n=k$)
for all $n\ge 0$.
\end{lemma}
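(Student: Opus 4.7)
The plan is to combine the identity $b_n = s_{n+1} - s_n$ with the enumeration of $b_n$ provided by Lemma~\ref{lemmaEnum}, and then proceed by a simple induction on $n$ starting from an explicit computation of $s_0$.

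First, I would handle the base case $n = 0$. Since the only word of length $0$ in a factorial set is the empty word, $p_0 = 1$; and by definition of $k$, $p_1 = \Card(S \cap A) = k+1$. Hence
\[
s_0 \;=\; p_1 - p_0 \;=\; k,
\]
which already gives the desired conclusion at $n = 0$ in each of the three cases.

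For the inductive step, I would apply Lemma~\ref{lemmaEnum} to write
\[
s_{n+1} - s_n \;=\; b_n \;=\; \sum_{w \in A^n \cap S} m(w).
\]
Under each hypothesis, the sign of every summand is controlled: if $S$ is strong, then $m(w) \ge 0$ for all $w \in A^n \cap S$, whence $b_n \ge 0$; if $S$ is weak, then $m(w) \le 0$, whence $b_n \le 0$; and if $S$ is neutral, then $m(w) = 0$, whence $b_n = 0$. Telescoping from $s_0 = k$ via the identity
\[
s_n \;=\; k + \sum_{i=0}^{n-1} b_i
\]
immediately yields the three claims $s_n \ge k$, $s_n \le k$, and $s_n = k$, respectively.

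There is no real obstacle here: the genuine combinatorial content is entirely packaged in Lemma~\ref{lemmaEnum}, and what remains is a one-line induction once $s_0$ has been identified with $k$. The only point that warrants a quick sanity check is that the $m(w)$ appearing in Lemma~\ref{lemmaEnum} matches the quantity $e(w)-\ell(w)-r(w)+1$ used to define strong/weak/neutral sets, which is immediate from the definitions.
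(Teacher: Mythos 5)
Your proof is correct and follows essentially the same route as the paper: both arguments apply Lemma~\ref{lemmaEnum} to control the sign of $b_n$ under each hypothesis (so that $(s_n)$ is nondecreasing, nonincreasing, or constant) and then conclude from $s_0=k$. Your explicit computation of $s_0$ and the telescoping sum merely spell out details the paper leaves implicit.
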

\begin{proof}
Assume that $S$ is strong. Then $m(w)\ge 0$ for all $w\in S$ and
thus, by Lemma~\ref{lemmaEnum}, the sequence $(s_n)$ is nondecreasing.
Since $s_0=k$, this implies $s_n\ge k$ for all $n$. 
The proof of the other cases is similar.
\end{proof}

We now give an example of a set of complexity $2n+1$ on an alphabet
with three letters
which is not neutral.
\begin{example}\label{exampleChacon}
Let $A=\{a,b,c\}$.
The \emph{Chacon word} on three letters
is the fixpoint $x=f^\omega(a)$ of the morphism $f$ from
$A^*$ into itself defined by $f(a)=aabc$, $f(b)=bc$ and $f(c)=abc$.
Thus $x=aabcaabcbcabc\cdots$. The \emph{Chacon set} is the set $S$ of
factors of $x$. It is of complexity $2n+1$ (see~\cite[Section 5.5.2]{PytheasFogg2002}).

It contains strong, neutral and weak words. Indeed,
$S\cap A^2=\{aa,ab,bc,ca,cb\}$ and thus $m(\varepsilon)=0$
showing that the empty word is neutral. Next
$E(abc)=\{(a,a), (c,a), (a,b),(c,b)\}$ shows that $m(abc)=1$
and thus $abc$ is strong. Finally, $E(bca)=\{(a,a),(c,b)\}$
and thus $m(bca)=-1$ showing that $bca$ is weak.
\end{example}

\subsection{The Cardinality Theorem}
The following result,  referred to as the Cardinality Theorem,
is a generalization of
a result proved in \cite{BerstelDeFelicePerrinReutenauerRindone2012} in the less
general case of a Sturmian set. Since the set $S\cap A^n$ is an $S$-maximal
bifix
code of $S$-degree $n$ (see Example~\ref{exampleUniform}), it is also a
generalization
of Proposition~\ref{propComplexityNeutral}.

\begin{theorem}\label{theoremCardinality}
Let $S$ be a recurrent set containing the alphabet
$A$ and let $X\subset S$ be a finite $S$-maximal
 bifix code. Set $k=\Card(A)-1$ and $d=d_X(S)$.
  If $S$ is strong (resp. weak), then
 $\Card(X)-1\ge dk$ (resp. $\Card(X)-1\le dk$). If $S$ is neutral, 
then $\Card(X)-1=dk$. 
\end{theorem}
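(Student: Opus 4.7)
Setting $d = d_X(S)$, first note that $X$ is $S$-thin because it is finite and $S$ is biextendable. By Theorem~4.2.8 of~\cite{BerstelDeFelicePerrinReutenauerRindone2012} (cited just after Example~\ref{exampleUniform}), the set $I(X) \cap S$ coincides with $\{w \in S : \delta_X(w) < d\}$ and is in particular finite; every $w \in S \setminus I(X)$ has exactly $d$ parses. Moreover, since $S$ is recurrent and $X$ is a finite bifix code, $X$ is also an $S$-maximal prefix code, by the equivalence quoted at the start of the section on maximal bifix codes.

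My starting point is the elementary prefix identity
\begin{displaymath}
\Card(X) - 1 \;=\; \sum_{p \in P}(r(p) - 1),
\end{displaymath}
where $P \subset S$ is the set of proper prefixes of words of $X$. This is obtained by observing that, since $X$ is $S$-maximal prefix, each nonempty element of $P \cup X$ is uniquely of the form $pa$ with $p \in P$ and $a \in R(p)$; hence $|P \cup X| - 1 = \sum_{p \in P} r(p)$, and subtracting $|P|$ (using $P \cap X = \emptyset$) gives the identity. Symmetrically, $\Card(X) - 1 = \sum_{q \in Q}(\ell(q) - 1)$ for the set $Q$ of proper suffixes.

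The plan is then to compare $\sum_{p \in P}(r(p) - 1)$ with $dk$ by aggregating according to parse depth. For each $p \in P$, Equation~\eqref{eqparses} says that $\delta_X(pa)$ equals either $\delta_X(p)$ (when $pa \in A^*X$) or $\delta_X(p) + 1$ (otherwise); and $\delta_X$ takes values in $\{1, \dots, d\}$ on $P$. This stratifies $P$ into $d$ layers $P^{(1)}, \ldots, P^{(d)}$. On each layer $P^{(j)}$, the partial sum $\sum_{p \in P^{(j)}}(r(p) - 1)$ should identify, via a recoding of the right-extension structure of the tree of proper prefixes, with a full length-slice first difference $\sum_{w \in S \cap A^{n_j}}(r(w) - 1) = s_{n_j}$ for some length $n_j$, thanks to Lemma~\ref{lemmaEnum}. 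Aggregating across the $d$ layers and invoking Lemma~\ref{lemmasn}, which gives $s_n \ge k$ (resp.\ $\le k$, $= k$) in the strong (resp.\ weak, neutral) case, yields the three conclusions of the theorem at once.

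The main obstacle is the layer-by-layer matching: one must show that each stratum $P^{(j)}$, together with its right-extension structure, corresponds to a full length-slice $S \cap A^{n_j}$ (or at least bounds its first-difference sum on the correct side). This relies on a careful combinatorial analysis exploiting the bifix structure of $X$, the characterisation of $I(X) \cap S$ by $\delta_X$, and the fact that at each layer the parse-depth-preserving edges terminate in elements of $X$, so that after a recoding the layer is again a complete length-slice of $S$. Once this identification is in place, Lemma~\ref{lemmasn} delivers the theorem immediately; in the strong and weak cases, even a one-sided bound on each layer suffices for the desired inequality.
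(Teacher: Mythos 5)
Your reduction to the identity $\Card(X)-1=\sum_{p\in P}(r(p)-1)$ is correct (it is exactly Lemma~\ref{lemmaArity}, valid because $S$ is factorial and, $S$ being recurrent, $X$ is an $S$-maximal prefix code), but everything after that is a plan rather than a proof, and the plan rests on an unjustified and dubious claim. The assertion that each parse-depth stratum $P^{(j)}=\{p\in P\mid \delta_X(p)=j\}$ ``identifies, via a recoding, with a full length-slice $S\cap A^{n_j}$'' is precisely the content one would have to prove, and you explicitly leave it as ``the main obstacle''. It is not clear it is even true in the form you state: the strata are in general not length slices and mix words of different lengths (for the Fibonacci set and $X=\{a,baab,bab\}$ one gets $P^{(2)}=\{b,ba,baa\}$), there is no specified value of $n_j$ and no described recoding, and Lemma~\ref{lemmasn} only controls sums of $r(w)-1$ over \emph{complete} slices $S\cap A^n$, so it cannot be invoked for a stratum without the missing identification. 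Note also that even a per-stratum one-sided bound would need the strong/weak hypothesis to be propagated along extensions inside the stratum, which is exactly the nontrivial step.

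The paper closes this gap by a different aggregation, with no layer-by-layer matching. It fixes $N$ larger than the lengths of words of $X$, views the words of $S$ of length at most $N$ as a tree under \emph{left} extension ($w$ is the parent of $aw$), and applies a weighted tree inequality (Lemma~\ref{lemmaCombinat}) with $\pi(w)=r(w)-1$ and $\alpha$ the indicator of being a proper prefix of $X$. The node condition comes from strongness/neutrality: $\sum_{a\in L(u)}(r(au)-1)=e(u)-\ell(u)\ge r(u)-1$; the leaf condition comes from the fact that every word of length $N$ has exactly $d$ parses, hence exactly $d$ suffixes in $P$ (these suffixes are its ancestors in the left-extension tree). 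The lemma then gives $\sum_{p\in P}(r(p)-1)\ge d\,(r(\varepsilon)-1)=dk$ in the strong case, with the symmetric statement for the weak case. If you want to salvage your stratified approach you would have to prove the matching claim, which amounts to redoing this kind of global tree argument; as written, the key inequality is not established.
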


Note that, for a recurrent neutral set $S$,
 a bifix code $X\subset S$ may be infinite since this may happen
for a Sturmian set $S$
(see~\cite[Example 5.1.4]{BerstelDeFelicePerrinReutenauerRindone2012}).

We consider rooted trees with the usual notions of root, node, child and parent.
The following lemma is an application of a well-known lemma on trees
relating the number of its leaves to the sum of the degrees
of its internal nodes.
\begin{lemma}\label{lemmaArity}
Let $S$ be a prefix-closed set.
Let $X$ be a finite $S$-maximal  prefix code and let $P$ be the set of its proper prefixes.
 Then
$\Card(X)=1+\sum_{p\in P}(r(p)-1)$.
\end{lemma}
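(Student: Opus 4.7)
The plan is to view $P \cup X$ as the vertex set of a rooted tree $T$ whose root is the empty word and whose edges are pairs $(p, pa)$ with $p \in P$, $a \in A$, and both $p$ and $pa$ lying in $P \cup X$. Every nonempty $w \in P \cup X$ has a unique parent (its longest proper prefix, which lies in $P$ because $S$ is prefix-closed and, when $w \in X$, because $X$ is a prefix code so the parent cannot itself lie in $X$), so iterating reaches $\varepsilon$ and $T$ is indeed a tree. Granting that the leaves are exactly the elements of $X$, that the internal nodes are exactly the elements of $P$, and that each $p \in P$ has exactly $r(p)$ children, the count $|V| - 1 = \sum_{v}(\text{number of children of }v)$ yields
\[
\Card(P) + \Card(X) - 1 \;=\; \sum_{p \in P} r(p),
\]
which rearranges to the desired $\Card(X) = 1 + \sum_{p \in P}(r(p) - 1)$.

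The substantive step is the identification of the children of each $p \in P$. First, no $x \in X$ has a child, since $xa \in P \cup X$ would make $x$ a proper prefix of some element of $X$, violating that $X$ is a prefix code; conversely, every $p \in P$ is by definition a proper prefix of some $x \in X$, hence has at least one child toward $x$. What remains is to show that for $p \in P$ and $a \in A$, the condition $pa \in P \cup X$ is equivalent to $pa \in S$. One direction is immediate because $P \cup X \subseteq S$. For the other direction, I would invoke the fact recalled in the excerpt that, since $X$ is an $S$-maximal prefix code, every word of $S$ is comparable for the prefix order with some $x \in X$: applying this to $pa$, either $pa$ is a prefix of $x$ (giving $pa \in P$ or $pa = x \in X$), or $x$ is a strict prefix of $pa$. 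The latter would force $x$ to be a prefix of the element of $X$ that has $p$ as a proper prefix, contradicting that $X$ is prefix. Hence $pa \in P \cup X$ as required.

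Once this equivalence is in hand, the number of children of $p \in P$ is by definition $\Card\{a \in A : pa \in P \cup X\} = \Card(R(p)) = r(p)$, and the edge count above completes the proof. The only obstacle worth flagging is the case analysis in the previous paragraph, which needs the $S$-maximality of $X$ combined with the prefix-code property; the rest is a routine leaf-counting identity for rooted trees.
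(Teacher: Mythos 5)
Your proof is correct and is essentially the argument the paper intends: the paper simply invokes ``a well-known lemma on trees relating the number of its leaves to the sum of the degrees of its internal nodes'' without writing out details, and you supply exactly the required identifications (leaves $=X$, internal nodes $=P$, and, using the $S$-maximality of $X$ together with the prefix-code property, each $p\in P$ having precisely $r(p)$ children), after which the leaf-counting identity gives $\Card(X)=1+\sum_{p\in P}(r(p)-1)$.
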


We order the nodes of a tree from the parent to the child and thus
we have $m\le n$ if $m$ is a descendant of $n$. We denote $m<n$
if $m\le n$ with $m\ne n$.

\begin{lemma}\label{lemmaCombinat}
Let $T$ be a finite tree with root $r$ on a set $N$ of nodes, let $d\ge 1$, and let $\pi,\alpha$
  be  functions
assigning to each node an integer such that
\begin{enumerate}
\item[\rm(i)]for each internal node $n$,  $\pi(n)\le \sum\pi(m)$ where the sum
  runs over the children of $n$,
\item[\rm(ii)]
 for each leaf $m$ of $T$, one has $\sum_{m\le n}\alpha(n)=d$.
\end{enumerate}
Then $\sum_{n\in N}\alpha(n)\pi(n)\ge d\pi(r)$.
\end{lemma}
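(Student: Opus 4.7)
The plan is to prove the bound by a double-counting identity based at the leaves, followed by a reduction to the defects of condition~(i). For each node $n$, set $\tilde\pi(n)=\sum_{m\text{ leaf},\,m\le n}\pi(m)$; iterating (i) upward from the leaves gives $\pi(n)\le\tilde\pi(n)$, so $\beta(n):=\tilde\pi(n)-\pi(n)\ge 0$. Exchanging the order of summation and using (ii),
\[
\sum_{n\in N}\alpha(n)\,\tilde\pi(n)\;=\;\sum_{m\text{ leaf}}\pi(m)\sum_{m\le n}\alpha(n)\;=\;d\,\tilde\pi(r).
\]
A short rearrangement turns the target inequality $\sum_n\alpha(n)\pi(n)\ge d\,\pi(r)$ into the equivalent auxiliary bound $\sum_n\alpha(n)\,\beta(n)\le d\,\beta(r)$.

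To handle the auxiliary bound, set $\gamma(n):=\sum_c\pi(c)-\pi(n)\ge 0$, where $c$ runs over the children of an internal $n$: this is the nonnegative defect in (i). Since $\beta=0$ on leaves and $\beta(n)=\sum_c\beta(c)+\gamma(n)$ on internal $n$, an induction on the depth telescopes to $\beta(n)=\sum_{n'\le n,\,n'\text{ internal}}\gamma(n')$. Exchanging sums yields
\[
\sum_n\alpha(n)\,\beta(n)=\sum_{n'\text{ internal}}\gamma(n')\,A(n'),\qquad A(n'):=\sum_{n\ge n'}\alpha(n),
\]
while $d\,\beta(r)=\sum_{n'\text{ internal}}d\,\gamma(n')$. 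The problem is thus reduced to proving
\[
\sum_{n'\text{ internal}}\gamma(n')\bigl(A(n')-d\bigr)\le 0.
\]

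The crux, and the main obstacle, is this last step: since $\gamma(n')\ge 0$, it is enough to show $A(n')\le d$ for every internal $n'$. Given such an $n'$, pick any leaf $m\le n'$. The ancestors of $m$ decompose as the ancestors of $n'$ together with the strict descendants of $n'$ lying on the path from $m$ up to $n'$; invoking (ii) gives
\[
A(n')=A(m)-\sum_{m\le n''<n'}\alpha(n'')=d-\sum_{m\le n''<n'}\alpha(n'').
\]
In the intended applications $\alpha$ will be nonnegative (a condition implicit in the statement and automatic in the bifix-code setting where the lemma is used), so the subtracted sum is nonnegative, hence $A(n')\le d$, and the lemma follows.
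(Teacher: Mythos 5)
Your argument is correct and takes a genuinely different route from the paper's. The paper proves the lemma by induction on the number of nodes: it picks an internal node all of whose children are leaves, observes via (ii) that $\alpha$ takes a common value $v$ on those children, deletes them while adding $v$ to $\alpha$ at their parent, and checks using (i) that the hypotheses persist and that $\sum_{n}\alpha(n)\pi(n)$ does not increase. You instead argue globally: pushing $\pi$ down to the leaves via $\tilde\pi$, exchanging summations against (ii) to get $\sum_n\alpha(n)\tilde\pi(n)=d\,\tilde\pi(r)$, and reducing the claim to the nonnegative defects $\gamma(n')$ of (i) weighted by $A(n')=\sum_{n\ge n'}\alpha(n)$. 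In fact your computation yields the exact identity $\sum_{n}\alpha(n)\pi(n)=d\,\pi(r)+\sum_{n'\ \mathrm{internal}}\gamma(n')\bigl(d-A(n')\bigr)$, which makes visible where each hypothesis enters and gives the symmetric statement mentioned after the lemma (with the inequality in (i) reversed) for free, whereas the paper's pruning induction is shorter but leaves the bookkeeping implicit. One point to note: both proofs need $\alpha\ge 0$ -- you use it to conclude $A(n')\le d$, and the paper uses it tacitly when multiplying the inequality of (i) by the common child value $v$ -- and some such hypothesis is genuinely necessary even though the statement only asks for integers: on the two-node tree with root $r$ and single leaf $m$, taking $d=1$, $\alpha(r)=2$, $\alpha(m)=-1$, $\pi(r)=0$, $\pi(m)=1$ satisfies (i) and (ii) but gives $\sum_n\alpha(n)\pi(n)=-1<0=d\,\pi(r)$. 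In the application (the proof of Theorem~\ref{theoremCardinality}) $\alpha$ is the indicator of the proper prefixes of $X$ and $\pi(w)=r(w)-1\ge 0$, so your explicitly flagged assumption is satisfied there; making it explicit is a virtue of your write-up rather than a gap.
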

\begin{proof}
We use an induction on the number of nodes of $T$. If $T$ is reduced
to its root, then $d=\alpha(r)$ implies $\alpha(r)\pi(r)=d\pi(r)$ and the
result is true. 
Assume that
it holds for trees with less nodes than $T$. Since $T$ is finite
and not reduced to its root,
there is an internal node such that all its children are leaves of $T$.
Let $m$ be such a node.
Since  $\sum_{x\le n}\alpha(n)=\alpha(x)+\sum_{m\le n}\alpha(n)$ 
has  value $d$ for each child $x$ of $m$,
the value $v=\alpha(x)$ is the same for all children of $m$.
 Let $T'$ be the tree  obtained from $T$ by
deleting all children of $m$. Let $N'$ be the set of nodes of $T'$.
Let $\pi'$ be the restriction
of $\pi$ to $N'$ and let $\alpha'$ be defined by
\begin{displaymath}
\alpha'(n)=\begin{cases}\alpha(n)&\text{if $n\ne m$},\\
                        \alpha(m)+v&\text{otherwise.}
\end{cases}
\end{displaymath}
It is easy to verify that $T',\pi'$ and $\alpha'$ satisfy
the same hypotheses as $T,\pi$ and $\alpha$. Then 
\begin{eqnarray*}
\sum_{n\in N}\alpha(n)\pi(n)&=&
\sum_{n\in N'\setminus
  m}\alpha(n)\pi(n)+\alpha(m)\pi(m)+\sum_{x< m}v\pi(x)\\
&=&\sum_{n\in N'\setminus
  m}\alpha'(n)\pi'(n)+\alpha(m)\pi(m)+v\sum_{x< m}\pi(x)\\
&\ge&\sum_{n\in N'\setminus
  m}\alpha'(n)\pi'(n)+(\alpha(m)+v)\pi(m)\\
&=&\sum_{n\in N'\setminus
  m}\alpha'(n)\pi'(n)+\alpha'(m)\pi'(m)=\sum_{n\in N'}\alpha'(n)\pi'(n),
\end{eqnarray*}
whence the result by the induction hypothesis.
\end{proof}
A symmetric statement holds replacing the inequality
in condition (i) by $\pi(n)\ge\sum\pi(m)$ and the conclusion
by $\sum_{n\in N}\alpha(n)\pi(n)\le d\pi(r)$.\\

\begin{proofof}{of Theorem~\ref{theoremCardinality}}
Assume first that $S$ is strong.
Let  $N$  be larger than the lengths of the words of $X$.

Let $U$ be the set  of words of $S$ of length at most $N$. By considering each word
$w$ as the father of $aw$ for $a\in A$, the set $U$ can be considered as a tree
$T$ with root the empty word $\varepsilon$. The leaves of $T$ are the
elements of $S$ of length $N$.

For $w\in U$, set
$\pi(w)=r(w)-1$ and let 
\begin{displaymath}
\alpha(n)=\begin{cases}1&\text{ if $n$ is a proper prefix of $X$}\\ 
       0&\text{ otherwise.}
\end{cases}
\end{displaymath}
 Let us verify that
the conditions of Lemma~\ref{lemmaCombinat} are satisfied.
Let $u$ be in $U$ with $|u|<N$.
Then,
since $u$ is strong or neutral,
$\sum_{a\in L(u)}(r(au)-1)=e(u)-\ell(u)\ge r(u)-1$. This implies that
$\sum_{au\in S}\pi(au)\ge \pi(u)$ showing that condition (i) is satisfied.

Let $w$ be a leaf of $T$, that is, a word of $S$ of length $N$.
Since $N$ is larger than the maximal length of the words of $X$,
the word $w$ is not an internal factor of $X$
and thus it has $d$ parses with respect to $X$.
It implies that it has $d$ suffixes which are proper prefixes of $X$
(since $X$ is right $S$-complete, this is the same as to have no
prefix
in $X$).
Thus $\sum_{w\le u}\alpha(u)=d$. Thus condition (ii) is also satisfied.

By Lemma~\ref{lemmaCombinat}, we have
$\sum_{n\in U}\alpha(n)\pi(n)\ge d\pi(\varepsilon)$. Let $P$ be the set of proper
prefixes
of $X$. By definition of $\alpha$, we have
$\sum_{n\in U}\alpha(n)\pi(n)=\sum_{p\in P}\pi(p)$ and thus by
definition
of $\pi$, $d\pi(\varepsilon)=dk\le\sum_{p\in P}(r(p)-1)$. Since $S$
is recurrent, $X$ is an $S$-maximal prefix code. Thus, by Lemma~\ref{lemmaArity},
we have $\Card(X)=1+\sum_{p\in P}(r(p)-1)$ and thus we obtain 
$\Card(X)\ge 1+dk$ which is the desired conclusion.

The proof that $\Card(X)-1\le dk$ if $S$ is weak is
symmetric, using the symmetric version of Lemma~\ref{lemmaCombinat}.
The case where $S$ is neutral follows then directly.
\end{proofof}
We illustrate Theorem~\ref{theoremCardinality} in the following example.
\begin{example}\label{exampleCardinalityTheorem}

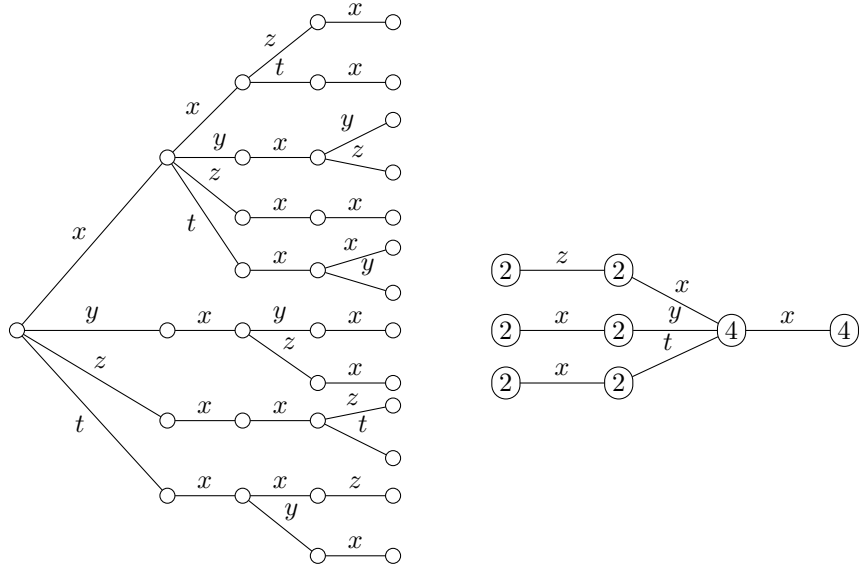
\begin{figure}[hbt]
\gasset{Nadjust=wh,AHnb=0}\centering
\begin{picture}(100,70)(0,-7)
\node(1)(0,22){}\node(x)(20,45){}\node(y)(20,22){}\node(z)(20,10){}\node(t)(20,0){}
\node(xx)(30,55){}\node(xy)(30,45){}\node(xz)(30,37){}\node(xt)(30,30){}
\node(yx)(30,22){}\node(zx)(30,10){}\node(tx)(30,0){}
\node(xxz)(40,63){}\node(xxt)(40,55){}
\node(xyx)(40,45){}\node(xzx)(40,37){}\node(xtx)(40,30){}
\node(yxy)(40,22){}\node(yxz)(40,15){}
\node(zxx)(40,10){}\node(txx)(40,0){}\node(txy)(40,-8){}
\node(xxzx)(50,63){}\node(xxtx)(50,55){}
\node(xyxy)(50,50){}\node(xyxz)(50,43){}
\node(xzxx)(50,37){}\node(xtxx)(50,33){}\node(xtxy)(50,27){}
\node(yxyx)(50,22){}\node(yxzx)(50,15){}
\node(zxxz)(50,12){}\node(zxxt)(50,5){}
\node(txxz)(50,0){}\node(txyx)(50,-8){}

\drawedge(1,x){$x$}\drawedge(1,y){$y$}\drawedge(1,z){$z$}
\drawedge[ELside=r](1,t){$t$}
\drawedge(x,xx){$x$}\drawedge[ELpos=70](x,xy){$y$}\drawedge(x,xz){$z$}\drawedge[ELside=r](x,xt){$t$}
\drawedge(y,yx){$x$}\drawedge(z,zx){$x$}\drawedge(t,tx){$x$}
\drawedge(xx,xxz){$z$}\drawedge(xx,xxt){$t$}
\drawedge(xy,xyx){$x$}\drawedge(xz,xzx){$x$}\drawedge(xt,xtx){$x$}
\drawedge(yx,yxy){$y$}\drawedge(yx,yxz){$z$}
\drawedge(zx,zxx){$x$}\drawedge(tx,txx){$x$}\drawedge(tx,txy){$y$}
\drawedge(xxz,xxzx){$x$}\drawedge(xxt,xxtx){$x$}
\drawedge(xyx,xyxy){$y$}\drawedge(xyx,xyxz){$z$}
\drawedge(xzx,xzxx){$x$}\drawedge(xtx,xtxx){$x$}\drawedge[ELpos=60](xtx,xtxy){$y$}
\drawedge(yxy,yxyx){$x$}\drawedge(yxz,yxzx){$x$}
\drawedge(zxx,zxxz){$z$}\drawedge(zxx,zxxt){$t$}
\drawedge(txx,txxz){$z$}\drawedge(txy,txyx){$x$}

\node(xtx)(65,15){$2$}\node(xyx)(65,22){$2$}\node(zxx)(65,30){$2$}
\node(tx)(80,15){$2$}\node(yx)(80,22){$2$}\node(xx)(80,30){$2$}
\node(x)(95,22){$4$}\node(1)(110,22){$4$}

\drawedge(xtx,tx){$x$}\drawedge(xyx,yx){$x$}\drawedge(zxx,xx){$z$}
\drawedge(tx,x){$t$}\drawedge(yx,x){$y$}\drawedge(xx,x){$x$}
\drawedge(x,1){$x$}

\end{picture}
\caption{The words of length at most $4$ of a neutral set
$G$  and the tree of
  right-special
words.}\label{figxyzt}
\end{figure}
Consider the set $G$ of words on the alphabet $B=\{x,y,z,t\}$ obtained
as follows.
Let $S$ be the Fibonacci set and let $X\subset S$ be the $S$-maximal
bifix code of $S$-degree $3$ defined by $X=\{a,baabaab,baabab,babaab\}$. We
consider 
the morphism $f:B^*\rightarrow A^*$ defined by $f(x)=a$,
$f(y)=baabaab$, $f(z)=baabab$, $f(t)=babaab$. We set $G=f^{-1}(S)$.

The words of $G$ of length at most $4$ are
represented in Figure~\ref{figxyzt} on the left.

Since $S$ is Sturmian, it is a uniformly recurrent
tree set (see the definition in 
Section~\ref{sectionTreePlanarTree}). By the main result of~\cite{BertheDeFeliceDolceLeroyPerrinReutenauerRindone2013c}, the family of uniformly recurrent tree
sets is closed under maximal bifix decoding.
 Thus  $G$ is a uniformly recurrent tree set.

The tree of right-special words is represented on the right
in Figure~\ref{figxyzt} with the value of $r$ indicated at each
node. The bifix codes 
\begin{displaymath}
Y=\{xx,xyx,xz,xt,y,zx,tx\},\quad
Z=\{x,yxy,yxz,zxxz,zxxt,txxz,txy\}
\end{displaymath}
are $G$-maximal and have both $G$-degree $2$. 
In agreement with Theorem~\ref{theoremCardinality}, we have
$\Card(Y)=\Card(Z)=1+2(\Card(B)-1)=7$. The codes $Y$ and $Z$
are represented in Figure~\ref{figMaxBifix}.
\begin{figure}[hbt]
\gasset{Nadjust=wh,AHnb=0}\centering
\begin{picture}(100,35)
\node[fillgray=0](1)(0,15){}\node[ExtNL=y,Nframe=n](1bis)(0,15){$3$}
\node[fillgray=0](x)(15,25){}\node[ExtNL=y,Nframe=n](xbis)(15,25){$3$}
\node[Nmr=0](y)(15,16){}\node(z)(15,8){}\node(t)(15,0){}
\node[Nmr=0](xx)(30,35){}\node(xy)(30,28){}\node[Nmr=0](xz)(30,22){}
\node[Nmr=0](xt)(30,15){}\node[Nmr=0](zx)(30,8){}\node[Nmr=0](tx)(30,0){}
\node[Nmr=0](xyx)(45,28){}

\drawedge(1,x){$x$}\drawedge(1,y){$y$}\drawedge(1,z){$z$}\drawedge[ELside=r](1,t){$t$}
\drawedge(x,xx){$x$}\drawedge[ELpos=70](x,xy){$y$}\drawedge[ELpos=60](x,xz){$z$}\drawedge(x,xt){$t$}
\drawedge(z,zx){$x$}\drawedge(t,tx){$x$}\drawedge(xy,xyx){$x$}

\node[fillgray=0](Y1)(50,20){}\node[ExtNL=y,Nframe=n](Y1bis)(50,20){$3$}
\node[Nmr=0](Yx)(65,30){}\node(Yy)(65,19){}\node(Yz)(65,13){}\node(Yt)(65,5){}
\node[fillgray=0](Yyx)(80,21){}\node[ExtNL=y,Nframe=n](Yyxbis)(80,21){$1$}
\node(Yzx)(80,13){}
\node[fillgray=0](Ytx)(80,5){}\node[ExtNL=y,Nframe=n](Ytxbis)(80,5){$1$}
\node[Nmr=0](Ytxy)(95,-2){}
\node[Nmr=0](Yyxy)(95,30){}\node[Nmr=0](Yyxz)(95,21){}
\node[fillgray=0](Yzxx)(95,13){}\node[ExtNL=y,Nframe=n](Yzxxbis)(95,13){$1$}
\node(Ytxx)(95,5){}
\node[Nmr=0](Yzxxz)(110,18){}\node[Nmr=0](Yzxxt)(110,13){}\node[Nmr=0](Ytxxz)(110,5){}

\drawedge(Y1,Yx){$x$}\drawedge[ELpos=60](Y1,Yy){$y$}\drawedge[ELpos=60](Y1,Yz){$z$}\drawedge[ELside=r](Y1,Yt){$t$}
\drawedge(Yy,Yyx){$x$}\drawedge(Yz,Yzx){$x$}\drawedge(Yt,Ytx){$x$}
\drawedge(Yyx,Yyxy){$y$}\drawedge(Yyx,Yyxz){$z$}\drawedge(Yzx,Yzxx){$x$}
\drawedge(Ytx,Ytxx){$x$}\drawedge(Ytx,Ytxy){$y$}
\drawedge(Yzxx,Yzxxz){$z$}\drawedge[ELpos=70](Yzxx,Yzxxt){$t$}
\drawedge(Ytxx,Ytxxz){$z$}
\end{picture}
\caption{Two  $G$-maximal bifix codes of $G$-degree
  $2$.}\label{figMaxBifix}
\end{figure}
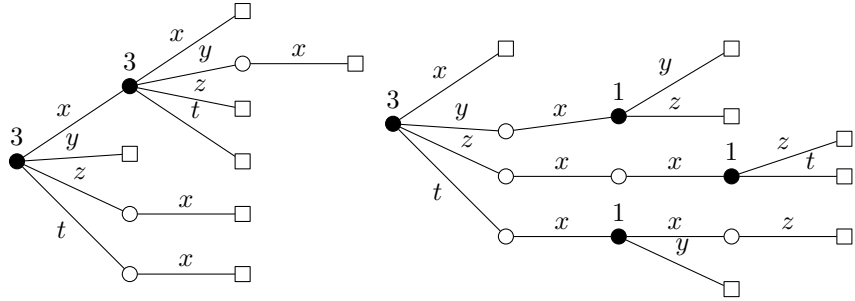
The right-special proper prefixes $p$ of $Y$ and $Z$ are
indicated in black in Figure~\ref{figMaxBifix} with the value of $r(p)-1$
indicated
for each one. In agreement with Lemma~\ref{lemmaArity},
the sum of the values of $r(p)-1$ is $6$ in both cases.

\end{example}

The following example illustrates the necessity of the hypotheses in
Theorem~\ref{theoremCardinality}.

\begin{example}\label{exampleBifixeChacon}
Consider again the Chacon set $S$ of Example~\ref{exampleChacon}.
Let $X=S\cap A^4$ and let $Y,Z$ be the $S$-maximal bifix codes of
$S$-degree $4$  represented in
Figure~\ref{figureCodeChacon}. The first one is
obtained from $X$ by internal transformation 
with respect to $abc$ . The second one with respect to $bca$.
\begin{figure}[hbt]
\gasset{Nadjust=wh,AHnb=0}
\centering
\begin{picture}(120,50)
\put(0,0){
\begin{picture}(60,50)(0,-3)
\node(1)(0,20){}
\node(a)(10,30){}\node(b)(10,20){}\node(c)(10,10){}
\node(aa)(20,40){}\node(ab)(20,30){}
\node(bc)(20,20){}
\node(ca)(20,10){}\node(cb)(20,0){}
\node(aab)(30,40){}\node[Nmr=0](abc)(30,30){}
\node(bca)(30,25){}\node(bcb)(30,17){}
\node(caa)(30,12){}\node(cab)(30,5){}
\node(cbc)(30,0){}
\node(aabc)(40,40){}
\node[Nmr=0](bcaa)(40,29){}\node[Nmr=0](bcab)(40,21){}
\node[Nmr=0](bcbc)(40,17){}
\node[Nmr=0](caab)(40,12){}
\node(cabc)(40,5){}
\node[Nmr=0](cbca)(40,0){}
\node[Nmr=0](aabca)(50,44){}\node[Nmr=0](aabcb)(50,36){}
\node[Nmr=0](cabca)(50,9){}\node[Nmr=0](cabcb)(50,1){}

\drawedge(1,a){$a$}\drawedge(1,b){$b$}\drawedge(1,c){$c$}
\drawedge(a,aa){$a$}\drawedge(a,ab){$b$}
\drawedge(b,bc){$c$}
\drawedge(c,ca){$a$}\drawedge(c,cb){$b$}
\drawedge(aa,aab){$b$}
\drawedge(ab,abc){$c$}
\drawedge(bc,bca){$a$}\drawedge(bc,bcb){$b$}
\drawedge(ca,caa){$a$}\drawedge(ca,cab){$b$}
\drawedge(cb,cbc){$c$}
\drawedge(aab,aabc){$c$}
\drawedge(bca,bcaa){$a$}\drawedge(bca,bcab){$b$}
\drawedge(bcb,bcbc){$c$}
\drawedge(caa,caab){$b$}
\drawedge(cab,cabc){$c$}
\drawedge(cbc,cbca){$a$}
\drawedge(aabc,aabca){$a$}\drawedge(aabc,aabcb){$b$}
\drawedge(cabc,cabca){$a$}\drawedge(cabc,cabcb){$b$}
\end{picture}
}
\put(60,0){
\begin{picture}(60,50)
\node(1)(0,20){}
\node(a)(10,30){}\node(b)(10,20){}\node(c)(10,10){}
\node(aa)(20,40){}\node(ab)(20,30){}
\node(bc)(20,20){}
\node(ca)(20,10){}\node(cb)(20,0){}
\node(aab)(30,40){}\node(abc)(30,30){}
\node[Nmr=0](bca)(30,25){}\node(bcb)(30,17){}
\node(caa)(30,12){}\node(cab)(30,5){}
\node(cbc)(30,0){}
\node[Nmr=0](aabc)(40,40){}
\node(abca)(40,35){}\node[Nmr=0](abcb)(40,25){}
\node[Nmr=0](abcaa)(50,35){}
\node[Nmr=0](bcbc)(40,17){}
\node[Nmr=0](caab)(40,12){}
\node[Nmr=0](cabc)(40,5){}
\node(cbca)(40,0){}
\node[Nmr=0](cbcab)(50,0){}

\drawedge(1,a){$a$}\drawedge(1,b){$b$}\drawedge(1,c){$c$}
\drawedge(a,aa){$a$}\drawedge(a,ab){$b$}
\drawedge(b,bc){$c$}
\drawedge(c,ca){$a$}\drawedge(c,cb){$b$}
\drawedge(aa,aab){$b$}
\drawedge(ab,abc){$c$}
\drawedge(bc,bca){$a$}\drawedge(bc,bcb){$b$}
\drawedge(ca,caa){$a$}\drawedge(ca,cab){$b$}
\drawedge(cb,cbc){$c$}
\drawedge(aab,aabc){$c$}
\drawedge(abc,abca){$a$}\drawedge(abc,abcb){$b$}
\drawedge(abca,abcaa){$a$}
\drawedge(bcb,bcbc){$c$}
\drawedge(caa,caab){$b$}
\drawedge(cab,cabc){$c$}
\drawedge(cbc,cbca){$a$}

\drawedge(cbca,cbcab){$b$}
\end{picture}
}
\end{picture}
\caption{Two $S$-maximal bifix codes of $S$-degree $4$.}\label{figureCodeChacon}
\end{figure}
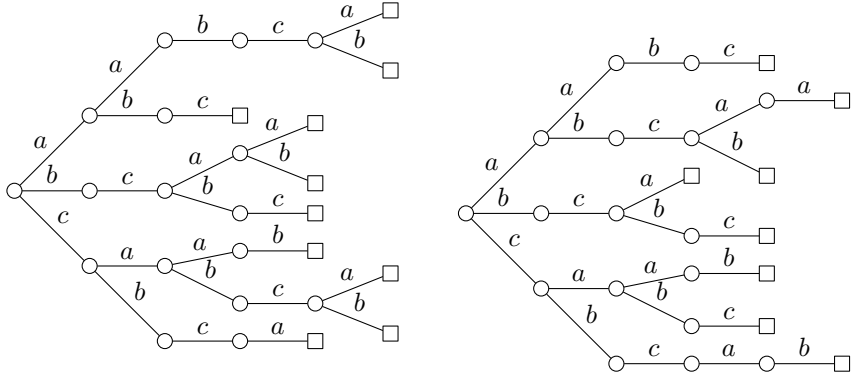
We have $\Card(Y)=10$ and $\Card(Z)=8$ showing that $\Card(Y)-1>8$ and
$\Card(Z)-1<8$, illustrating the fact that $S$ is neither strong nor
weak.
\end{example}

The following example shows that the class of sets of factor complexity
$kn+1$ is not closed by maximal bifix decoding.
\begin{example}\label{exampleBifixeChacon2}
Let $S$ be the Chacon set and let $f:B^*\rightarrow A^*$
 be a coding morphism for the $S$-maximal bifix code $Z$ of $S$-degree $4$
with $8$ 
elements of Example~\ref{exampleBifixeChacon}.
One may verify that $\Card(B^2\cap f^{-1}(S))=\Card(Z^2\cap S)=17$. 
This shows that the set $f^{-1}(S)$ does not have factor complexity
$7n+1$.
\end{example}
\subsection{A converse of the Cardinality Theorem}\label{sectionConverseCardinality}
We end this section with a statement proving a converse of the
Cardinality Theorem.
\begin{theorem}\label{theoremClelia}
Let $S$ be a uniformly recurrent set containing the alphabet $A$. If  any finite $S$-maximal
bifix code of $S$-degree $d$ has $d(\Card(A)-1)+1$ elements, then
$S$ is neutral.
\end{theorem}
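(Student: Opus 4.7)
The plan is to assume $S$ is not neutral and to construct an $S$-maximal bifix code whose cardinality violates the hypothesis. I begin by specializing the hypothesis to the uniform bifix codes $X_n=S\cap A^n$ (of $S$-degree $n$ by Example~\ref{exampleUniform}); this forces $p_n=nk+1$, so $s_n=k$ and $b_n=0$, and Lemma~\ref{lemmaEnum} yields the cancellation identity
\begin{displaymath}
\sum_{u\in A^n\cap S}m(u)=0\qquad\text{for every }n\ge 0.
\end{displaymath}
Combined with the one-sided bound $|Y|\le|X|$ that comes from the inequality $d_Y(S)\le d_X(S)$ of Proposition~\ref{propositionInternal} together with the hypothesis applied to $Y$, this cancellation identity will be enough to force $m(u)=0$ for every $u\in S$.

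I first handle words that are not powers of a single letter. Fix such a $w\in S$ and apply the internal transformation to $X=S\cap A^{|w|+1}$ with respect to $w$. Here $G=Xw^{-1}=L(w)$ and $D=w^{-1}X=R(w)$ are sets of letters, both nonempty by uniform recurrence. Since $w$ is not a power of a letter, the overlap equation $uw=wv$ has no solution in letters, so $G_0=D_0=\emptyset$ and the transformation reduces to the simple form~\eqref{eqInternal}. A direct count of removed and added elements gives $|Y|=|X|-\ell(w)-r(w)+1+e(w)=|X|+m(w)$; Proposition~4.4.5 of~\cite{BerstelDeFelicePerrinReutenauerRindone2012} applies and preserves the $S$-degree, so the hypothesis forces $|Y|=|X|$, hence $m(w)=0$.

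It remains to show $m(a^\ell)=0$ for every letter $a$ and every $\ell$ with $a^\ell\in S$. I do this by reverse induction on $\ell$, from $M_a=\max\{n:a^n\in S\}$ (finite by uniform recurrence) down to $1$. When $\ell=M_a$ we have $a^{M_a+1}\notin S$, so $G_0=\emptyset$ and the very same computation as above gives $|Y|=|X|+m(a^{M_a})$ with $S$-degree preserved, whence $m(a^{M_a})=0$. For $\ell<M_a$ one has $G_0=D_0=\{a\}$ and $G_1=L(a^\ell)\setminus\{a\}$, $D_1=R(a^\ell)\setminus\{a\}$, both nonempty because $S$ is uniformly recurrent and is not of the form $\{a\}^*$. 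Expanding
\begin{displaymath}
G_1wD_0^*D_1\cap S=\bigcup_{i=0}^{M_a-\ell}\{ga^{\ell+i}d\in S:g\in G_1,\;d\in D_1\},
\end{displaymath}
counting each slice by inclusion--exclusion in the pairs $(g,d)$ (singling out the contributions with $g=a$ or $d=a$), and telescoping via the identity $\ell(a^j)+r(a^j)=e(a^j)-m(a^j)+1$ produces the closed form
\begin{displaymath}
|Y|=|X|+\sum_{j=\ell}^{M_a}m(a^j).
\end{displaymath}
By the inductive hypothesis this collapses to $|Y|=|X|+m(a^\ell)$, and $|Y|\le|X|$ gives $m(a^\ell)\le 0$. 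Summing over the letters $a$ with $a^\ell\in S$ and invoking the cancellation identity (non-constant contributions already vanish by the previous step), each $m(a^\ell)$ is forced to vanish. This finishes the induction and contradicts the assumed non-neutrality of $S$.

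The main obstacle is the closed-form identity for $|Y|$ in the sub-case $G_0\ne\emptyset$: it requires careful bookkeeping of the triples $(g,a^{\ell+i},d)$ with $g,d\ne a$ belonging to $S$, summed over $i\in\{0,\ldots,M_a-\ell\}$, and matching the result against the sequence $m(a^\ell),\ldots,m(a^{M_a})$ via telescoping. Notice that the argument does not need the $S$-degree to be preserved in the case $G_0\ne\emptyset$; the one-sided inequality $|Y|\le|X|$ suffices (the equality $d_Y(S)=d_X(S)$ drops out at the end as an a posteriori ``other case'' of degree preservation, in the sense alluded to after Proposition~\ref{propositionInternal}). All remaining verifications---nonemptyness of $G_1$ and $D_1$, disjointness of the various added and removed sets, finiteness of $M_a$---are routine consequences of uniform recurrence.
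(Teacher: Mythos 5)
Your proposal is correct, and its skeleton is the paper's: take a (hypothetically) non-neutral word $w$, apply the internal transformation to $X=S\cap A^{|w|+1}$ with respect to $w$, and compare cardinalities, the count giving $\Card(Y)-\Card(X)=m(w)$ in the non-overlapping case and a telescoped sum of $m(a^j)$ when $w$ is a power of a letter (your closed form $\Card(Y)=\Card(X)+\sum_{j=\ell}^{M_a}m(a^j)$ is exactly what the paper's Case 2 computation yields once one does not assume the higher powers neutral). Where you genuinely diverge is in how the delicate power-of-a-letter case is closed. The paper proves that the $S$-degree of $Y$ is exactly $|w|+1$ there as well, by exhibiting a word $a^mb$ with $\delta_Y(a^mb)=|w|+1$, and it shortcuts the telescoping by choosing $|w|$ maximal such that $a^{|w|}$ is non-neutral (handling a non-neutral top power via the non-overlapping case, since $a^{m+1}\notin S$). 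You instead use only the one-sided bound $d_Y(S)\le d_X(S)$ from Proposition~\ref{propositionInternal}, which with the hypothesis gives $\Card(Y)\le\Card(X)$, hence $m(a^\ell)\le 0$, and you recover equality from the cancellation identity $\sum_{u\in A^n\cap S}m(u)=0$, itself a consequence of the hypothesis applied to the uniform codes $S\cap A^n$ via Lemma~\ref{lemmaEnum} --- an ingredient the paper's proof never invokes. This buys you independence from the parse-counting argument $\delta_Y(a^mb)=n+1$ (so you need no ``new case of degree preservation''), at the price of a full downward induction and the slice-by-slice bookkeeping; the paper's choice of a maximal non-neutral power makes the count shorter but leans on the exact degree computation. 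Two presentational points, neither a real gap: your downward induction must be run simultaneously over all letters at each level $\ell$ (your summation over all $a$ with $a^\ell\in S$ needs $m(b^j)=0$ for every letter $b$ and every $j>\ell$, not just for the letter currently under consideration), so state the inductive hypothesis jointly in $\ell$; and the empty word is covered neither by your non-power step nor by the induction, but it is disposed of at once by your own identity at $n=0$, which reads $m(\varepsilon)=b_0=0$ (the paper instead observes that $S\cap A^2$ would have the wrong cardinality).
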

\begin{proof}
We may assume that $A$ has more than one element.
We argue by contradiction. Let $w\in S$ be a word which is not
neutral. We cannot have $w=\varepsilon$ since otherwise the
$S$-maximal bifix code $X=S\cap A^2$ has not the good cardinality.

Set $n=|w|$ and $X=S\cap A^{n+1}$. The set $X$
is an $S$-maximal bifix code of $S$-degree $n+1$.
Let $Y$ be the code obtained by internal
transformation from $X$ with respect to $w$ and defined by 
Equation~\eqref{eqGenInternal}. Note that $G=L(w)$
and $D=R(w)$.

We distinguish two cases.

\paragraph{Case 1.} 
Assume that $Gw\cap wD=\emptyset$. 

 The code $Y$ is
 defined by Equation~\eqref{eqInternal} and we have $\Card(GwD\cap S)=e(w)$.
Since $D_0=G_0=\emptyset$, the hypotheses of Proposition~\ref{propositionInternal}
are satisfied and $Y$ has $S$-degree $n+1$ (by Proposition 4.4.5 in~\cite{BerstelDeFelicePerrinReutenauerRindone2012}).
 This implies
$\Card(X)=\Card(Y)$. On the other hand
\begin{displaymath}
\Card(Y)=\Card(X)+1+e(w)-\ell(w)-r(w)=\Card(X)+m(w).
\end{displaymath}
Since $w$ is not neutral, we have $m(w)\ne 0$  and thus we obtain a contradiction.
\paragraph{Case 2.}
Assume next that $Gw\cap wD\ne\emptyset$. Then $w=a^n$ with $n>0$ for some
letter $a$ and the sets $G_0,D_0$ defined by Equation~\eqref{eqG_0}
are $G_0=D_0=\{a\}$. Moreover $a^{n+1}\in X$.

Since $w$ is not neutral, it is bispecial.
Thus the sets $G_1,D_1$ are nonempty and the hypotheses of
Proposition~\ref{propositionInternal} are satisfied. 
Since $S$ is uniformly
recurrent and since $S\ne a^*$,
the set $a^*\cap S$ is finite. Set $a^* \cap S=\{1,a,\ldots,a^m\}$.
Thus $m\ge n+1$.

Let $b\ne a$ be a letter such that $a^mb\in S$. Then, $\delta_Y(a^m)=n$
since $a^m$ has $n$ suffixes which are proper prefixes of $Y$.
Moreover, $a^mb$ has no suffix in $Y$. Indeed, if $a^tb\in Y$,
we cannot have $t\ge n$ since $a^n\in Y$. And since all words in $Y$
except $a^n$
have length greater than $n$, $t<n$ is also impossible.
Thus by Equation~\eqref{eqparses},
we have
$\delta_Y(a^mb)=\delta_Y(a^m)+1$ and thus $\delta_Y(a^mb)=n+1$. This shows
that the $S$-degree of $Y$ is $n+1$ and thus that $\Card(Y)=\Card(X)$
as in Case 1.

 We may assume that $n$ is chosen
maximal such that $a^n$ is not neutral. This is always possible
if $a^m$ is neutral. Otherwise, Case 1 applies to $X=S\cap A^{m+1}$ and $w=a^m$.

For $n\le i\le m-2$ (there may be
no such integer $i$ if $n=m-1$), since $a^{i+1}$ is neutral, we have
\begin{displaymath}
\Card(G_1a^iD_1\cap S)=e(a^i)-\ell(a^{i+1})-r(a^{i+1})+1=e(a^i)-e(a^{i+1}).
\end{displaymath}
Moreover, $\Card(G_1a^{m-1}D_1\cap S)=e(a^{m-1})-r(a^m)-\ell(a^m)=e(a^{m-1})-e(a^m)-1$ and
$\Card(G_1a^mD_1\cap S)=e(a^m)$. Thus
\begin{eqnarray*}
\Card(G_1a^na^
*D_1\cap S)&=&\sum_{i=n}^{m-2}(e(a^i)-e(a^{i+1}))+e(a^{m-1})-e(a^m)-1+e(a^m)\\
&=&e(a^n)-1.
\end{eqnarray*}
Hence $\Card(Y)-\Card(X)$ evaluates as
\begin{eqnarray*}
&&1+\Card(G_1a^na^*D_1\cap S)-\Card(Ga^n)-\Card(a^nD)+1\\
&=&1+e(a^n)-1-\ell(a^n)-r(a^n)+1\\
&=& m(a^n)
\end{eqnarray*}
(the last $+1$ on the first line comes from the word $a^{n+1}$
counted twice in $\Card(Gw)+\Card(wD)$). Since $m(a^n)\ne 0$,
this contradicts the fact that $X$ and $Y$ have the same number of elements.
\end{proof}

\section{Tree sets}\label{sectionTreePlanarTree}
We  introduce in this section the notions of acyclic and
tree sets.
We state and prove the main result of this
paper (Theorem~\ref{newTheoremBasis}). The proof uses results
from \cite{BertheDeFeliceDolceLeroyPerrinReutenauerRindone2013d}.
\subsection{Acyclic and tree sets}
Let $S$ be a set of words.
For  $w\in S$,  the \emph{extension graph} $G(w)$
of $w$ is the following undirected bipartite graph. Its set of vertices
 is the disjoint union of two copies of the sets
$L(w)$ and $R(w)$. Next, its edges are the pairs 
$(a,b)\in E(w)$. By definition of $E(w)$,
an edge  goes from $a\in L(w)$
to $b\in R(w)$ if and only if $awb\in S$.

Recall that an undirected graph is a tree if it is connected and acyclic.

Let $S$ be a  biextendable set. We say that $S$ is \emph{acyclic} 
  if for every  word $w\in S$, the graph
 $G(w)$ is
acyclic. We say that $S$ is a \emph{tree set}
if $G(w)$ is a tree for all $w\in S$.

Clearly an acyclic set is weak and a tree set is neutral.

Note  that a biextendable set $S$ is a tree set
if and only if the graph $G(w)$ is a tree for every bispecial 
non-ordinary word
$w$.
Indeed, if $w$ is not bispecial or if it is ordinary, then  $G(w)$
is  always a tree. 

\begin{proposition}\label{propositionSturmIsTree}
A Sturmian set $S$ is a tree set.
\end{proposition}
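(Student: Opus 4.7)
The plan is to reduce the statement to the bispecial case using the remark immediately preceding the proposition, and then invoke Example~\ref{exSturmianIsOrdinary}. First I would note that a Sturmian set $S$ is uniformly recurrent, hence biextendable, so every $w\in S$ satisfies $\ell(w),r(w)\ge 1$ and the extension graph $G(w)$ is well defined on the disjoint union $L(w)\sqcup R(w)$. By the remark just before the proposition, to prove that $S$ is a tree set it is enough to verify that $G(w)$ is a tree for every bispecial word $w\in S$ that is not ordinary.

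Next I would apply Example~\ref{exSturmianIsOrdinary}, which states that in a Sturmian set every word is ordinary. In particular, every bispecial word of $S$ is ordinary, so the class of bispecial non-ordinary words is empty and the reduced condition holds vacuously, yielding the conclusion.

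If one prefers a self-contained verification avoiding the remark, one would instead argue directly for a bispecial $w\in S$: by the defining property of a strict episturmian word, there is a unique letter $a\in A$ such that $aw$ is right-special (hence right-extendable by every letter) and a unique $b\in A$ such that $wb$ is left-special, giving $L(w)=R(w)=A$ and $E(w)=\{a\}\times A\,\cup\,A\times\{b\}$. The graph $G(w)$ is then the union of the two stars centered at $a\in L(w)$ and $b\in R(w)$ sharing the edge $(a,b)$, so it has $2\Card(A)$ vertices and $2\Card(A)-1$ edges, is connected (every vertex on one side reaches the hub on the other), and is therefore a tree. For non-bispecial $w$, one of $L(w),R(w)$ is a singleton and $G(w)$ is immediately a star.

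There is essentially no obstacle here: the substantive content lies entirely in Example~\ref{exSturmianIsOrdinary}, so the main task is simply to cite it and combine it with the reduction provided by the remark.
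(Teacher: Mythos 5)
Your proposal is correct and follows the paper's own argument exactly: the paper proves the proposition by noting that $S$ is biextendable, citing Example~\ref{exSturmianIsOrdinary} to get that every (bispecial) word is ordinary, and invoking the remark preceding the proposition that reduces the tree condition to bispecial non-ordinary words. Your optional self-contained verification is just an unfolding of that example and adds nothing problematic.
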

Indeed, $S$ is biextendable and every bispecial word is ordinary
(see Example~\ref{exSturmianIsOrdinary}).

The following example shows that there are neutral sets
which are not tree sets.
\begin{example}\label{exampleNeutralNotTree}
Let $A=\{a,b,c\}$ and let $S$ be the set of factors of
$a^*\{bc,bcbc\} a^*$. The set $S$ is
biextendable. One has $S\cap A^2=\{aa,ab,bc,cb,ca\}$.
It is neutral. Indeed the empty word is neutral
since $e(\varepsilon)=\Card(S\cap A^2)=5=\ell(\varepsilon)+r(\varepsilon)-1$. 
Next, the only nonempty bispecial words
are $bc$ and 
$a^n$ for $n\ge 1$. They are neutral since $e(bc)=3=\ell(bc)+r(bc)-1$
and $e(a^n)=3=\ell(a^n)+r(a^n)-1$. However, $S$
is not  acyclic 
 since the graph $G(\varepsilon)$ contains
a cycle (and has two connected components, see Figure~\ref{figureChacon}).
\begin{figure}[hbt]
\centering\gasset{Nadjust=wh,AHnb=0}
\begin{picture}(20,10)
\node(a1)(0,10){$a$}\node(a2)(20,10){$a$}
\node(b1)(0,0){$b$}\node(b2)(20,5){$b$}
\node(c1)(0,5){$c$}\node(c2)(20,0){$c$}

\drawedge(a1,a2){}\drawedge(a1,b2){}
\drawedge(b1,c2){}
\drawedge(c1,a2){}\drawedge(c1,b2){}
\end{picture}
\caption{The graph $G(\varepsilon)$ for the set $S$.}\label{figureChacon}
\end{figure}
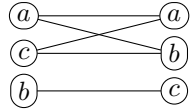
\end{example}
In the last example, the set is not recurrent. We present now
an example, due to Julien Cassaigne~\cite{Cassaigne2013} 
 of a uniformly recurrent set which is neutral
but is not a tree set (it is actually not even acyclic).

\begin{example}\label{exampleJulien}
Let $A=\{a,b,c,d\}$ and let $\sigma$ be the morphism from $A^*$
into itself defined by
\begin{displaymath}
\sigma(a)=ab,\ \sigma(b)=cda,\ \sigma(c)=cd,\ \sigma(d)=abc.
\end{displaymath}

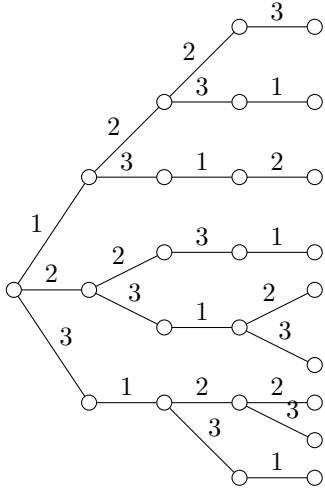
\begin{figure}[hbt]
\centering
\gasset{Nadjust=wh,AHnb=0}
\begin{picture}(50,60)

\begin{picture}(40,60)
\node(0)(0,25){}
\node(1)(10,40){}\node(2)(10,25){}\node(3)(10,10){}
\node(12)(20,50){}\node(13)(20,40){}
\node(22)(20,30){}\node(23)(20,20){}
\node(31)(20,10){}
\node(122)(30,60){}\node(123)(30,50){}
\node(131)(30,40){}
\node(223)(30,30){}
\node(231)(30,20){}
\node(312)(30,10){}\node(313)(30,0){}
\node(1223)(40,60){}\node(1231)(40,50){}
\node(1312)(40,40){}
\node(2231)(40,30){}
\node(2312)(40,25){}\node(2313)(40,15){}
\node(3122)(40,10){}\node(3123)(40,5){}
\node(3131)(40,0){}

\drawedge(0,1){$1$}\drawedge(0,2){$2$}\drawedge(0,3){$3$}
\drawedge(1,12){$2$}\drawedge(1,13){$3$}
\drawedge(2,22){$2$}\drawedge(2,23){$3$}
\drawedge(3,31){$1$}
\drawedge(12,122){$2$}\drawedge(12,123){$3$}
\drawedge(13,131){$1$}
\drawedge(22,223){$3$}
\drawedge(23,231){$1$}
\drawedge(31,312){$2$}\drawedge(31,313){$3$}
\drawedge(122,1223){$3$}\drawedge(123,1231){$1$}
\drawedge(131,1312){$2$}
\drawedge(223,2231){$1$}
\drawedge(231,2312){$2$}\drawedge(231,2313){$3$}
\drawedge(312,3122){$2$}\drawedge[ELpos=60](312,3123){$3$}
\drawedge(313,3131){$1$}
\end{picture}

\end{picture}
\caption{The words of length at most $4$ of the set $S$.}\label{figureCassaigne}
\end{figure}

Let $B=\{1,2,3\}$ and let $\tau:A^*\rightarrow B^*$ be defined by
\begin{displaymath}
\tau(a)=12,\quad \tau(b)=2,\quad \tau(c)=3,\quad \tau(d)=13.
\end{displaymath}
Let $S$ be the set of factors of the infinite word $\tau(\sigma^\omega(a))$
(see Figure~\ref{figureCassaigne}).

It is shown in~\cite[Example 4.5]{BertheDeFeliceDolceLeroyPerrinReutenauerRindone2013d} 
 that $S$ is a uniformly recurrent neutral set.
 It is not a tree set since $G(\varepsilon)$ is neither acyclic nor
connected.
\end{example}



\subsection{Finite index basis property}\label{sectionBasisTheorem}

Let $S$ be a recurrent set containing the alphabet $A$.
We say that $S$ has the \emph{finite index basis property} if the
following holds:
a  finite bifix code $X\subset S$ is an  $S$-maximal bifix code of $S$-degree
$d$ if and only if it is a basis of a subgroup of index $d$ of the
free group on $A$. 

We will prove the following result, referred to as the Finite Index
Basis Theorem. 

\begin{theorem}\label{newTheoremBasis}
Any uniformly recurrent tree set $S$  containing the alphabet $A$
has the finite index basis property.
\end{theorem}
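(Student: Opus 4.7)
My plan is to prove the two implications separately, using the two main tools cited in the introduction: the Return Words Theorem and the Saturation Theorem, both from~\cite{BertheDeFeliceDolceLeroyPerrinReutenauerRindone2013d}, together with the Cardinality Theorem (Theorem~\ref{theoremCardinality}). Since tree sets are simultaneously neutral and acyclic, both the Cardinality Theorem and the Saturation Theorem apply. The Cardinality Theorem combined with Schreier's formula plays the bookkeeping role: a finite $S$-maximal bifix code of $S$-degree $d$ has exactly $d(\Card(A)-1)+1$ elements, which is the rank of any subgroup of index $d$ in $F_A$.

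For the direct implication, let $X\subseteq S$ be a finite $S$-maximal bifix code of $S$-degree $d$, and set $H=\langle X\rangle\subseteq F_A$. By the cardinality match above, it suffices to show that $[F_A:H]=d$: the subgroup $H$ will then be free of rank $\Card(X)$, and since $X$ generates $H$ with exactly this many elements, it must be a basis. To compute the index, I would fix $w\in S$ with $\delta_X(w)=d$, so that $w$ has exactly $d$ suffixes $s_1,\dots,s_d$ with no prefix in $X$; such a $w$ exists by $S$-thinness of $X$. By the Return Words Theorem, $\RR_S(w)$ is a basis of $F_A$. The heart of the argument is a bijection between the right cosets $H\backslash F_A$ and $\{s_1,\dots,s_d\}$: for each $r\in\RR_S(w)$ one parses $wrw\in S$ by $X$ using $S$-maximal prefix completeness to write $wr = p\cdot s_i$ in $A^*$ with $p\in X^*$, and the Saturation Theorem is used to lift this identity to $F_A$, so that $wrw^{-1}\in H\cdot s_i w^{-1}$; since $F_A=\langle\RR_S(w)\rangle$, this gives $[F_A:H]\le d$. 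The reverse inequality is obtained by showing that the $s_i$ lie in pairwise distinct cosets of $H$: a coincidence $Hs_i=Hs_j$ with $i\ne j$ would force, again through the Saturation Theorem, a positive factorization of $s_is_j^{-1}$ in $X^*$, contradicting that both $s_i$ and $s_j$ are proper prefixes of elements of $X$ of bounded length.

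The main obstacle is the two-sided use of the Saturation Theorem: one must transport the prefix-parsing of $wrw$, which is an equality in the monoid $A^*$, to a genuine identity in the free group $F_A$ (ruling out hidden cancellations between the $X$-product and the trailing $s_i$), and conversely detect that a group-theoretic coincidence modulo $H$ would produce a positive $X$-factorization that is ruled out by the bifix structure. Both rely crucially on acyclicity of the extension graphs $G(u)$, which is exactly the tree hypothesis; outside the tree class the argument breaks down, as witnessed by the entries marked ``no'' in the $BT$ column of Figure~\ref{drawingClasses}.

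For the converse implication, let $X\subseteq S$ be a finite bifix code that is a basis of a subgroup $H$ of index $d$ in $F_A$. By Schreier's formula, $\Card(X)=d(\Card(A)-1)+1$. I would complete $X$ to a finite $S$-maximal bifix code $Y\subseteq S$ of some $S$-degree $d'$; this is possible by a standard completion argument, since $X$ is finite and $S$ is uniformly recurrent, so $I(X)$ is finite and the extension process terminates within a bounded $S$-degree. By the direct implication applied to $Y$, the code $Y$ is a basis of a subgroup $H'\supseteq H$ of index $d'$, so $H\subseteq H'$ forces $d'\mid d$ and hence $d'\le d$; on the other hand, $\Card(X)\le\Card(Y)=d'(\Card(A)-1)+1$ (by the Cardinality Theorem applied to $Y$) yields $d\le d'$. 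Therefore $d=d'$, which forces $\Card(X)=\Card(Y)$ and so $X=Y$, proving that $X$ is itself $S$-maximal of $S$-degree $d$.
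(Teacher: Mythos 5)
Your overall architecture is the paper's: forward direction via the Return Words Theorem, the Saturation Theorem, the Cardinality Theorem and Schreier's formula; converse via completion to an $S$-maximal bifix code $Y$, divisibility of indices and a cardinality comparison. The converse half of your proposal is essentially identical to the paper's and is fine. The genuine gap is in the central step of the forward direction, namely the claim that $[F_A:H]\le d$. From the parsing $wr=ps_i$ with $p\in X^*$ you only learn how right multiplication by $r$ moves the single coset $Hw$; knowing that all the elements $wr$ (or $wrw^{-1}$) fall into the $d$ cosets $Hs_i$ (resp.\ $Hs_iw^{-1}$) does not bound the index, even though the $r$ generate $F_A$: a subgroup of infinite index can contain most of a generating set, e.g.\ $H=\langle a\rangle$ in the free group on $\{a,b\}$ has infinite index although the generators $a,b$ lie in the two cosets $H$ and $Hb$. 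To bound the index you must show that the finite coset system $\{Hq\}$, where $q$ runs over the set $Q$ of the $d$ suffixes of $w$ having no prefix in $X$ (including the empty word, so that $H$ itself belongs to the system), is globally invariant under right multiplication by every return word \emph{and its inverse}. That is what the paper actually does: it parses $qy$ for every $q\in Q$ and $y\in\RR_S(w)$ (each $qy$ lies in $S$ because $q$ is a suffix of $w$), proves via the Saturation Theorem that the cosets $Hq$, $q\in Q$, are pairwise disjoint (if $Hp=Hq$ with $q=tp$, then $t\in\langle X\rangle\cap S=X^*\cap S$, forcing $t=\varepsilon$), deduces that any $v$ with $Qv\subset HQ$ permutes $Q$ and hence that $V=\{v\in F_A\mid Qv\subset HQ\}$ is a subgroup; only then do $\RR_S(w)\subset V$ and the Return Words Theorem give $V=F_A$, whence $F_A=HQ$ and $[F_A:H]=d$. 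Your sketch omits both the per-suffix parsing and the closure under inverses, which is where the real work lies.

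A smaller point: you place the role of the Saturation Theorem in ``lifting'' the monoid identity $wr=ps_i$ to $F_A$. No lifting is needed, since $A^*$ embeds in $F_A$ and the identity holds verbatim in the group. Acyclicity and the Saturation Theorem are really used through the equality $\langle X\rangle\cap S=X^*\cap S$, which yields the disjointness of the cosets $Hq$ (your ``reverse inequality''), and this disjointness is in turn what makes the permutation/subgroup argument above work.
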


Note that the Cardinality Theorem (Theorem~\ref{theoremCardinality})
holds
for a set $S$ satisfying the finite index basis property.
Indeed, by Schreier's formula a basis of a subgroup
of index $d$ of a free group on $s$ generators has $(s-1)d+1$
elements (actually we use Theorem \ref{theoremCardinality} in the
proof of Theorem~\ref{newTheoremBasis}).  

We denote by $\langle X\rangle$ the subgroup of the free group on $A$
generated by a
set of words $X$.
A submonoid $M$ of $A^*$ is called \emph{saturated} in $S$ 
if $M\cap S=\langle M\rangle\cap S$.
We recall the following result from
\cite{BertheDeFeliceDolceLeroyPerrinReutenauerRindone2013d}
(Theorem 6.2 referred to as the Saturation Theorem). 

\begin{theorem}\label{propositionHcapF}
  Let $S$ be an acyclic set. The submonoid generated by a bifix code 
included in $S$
is saturated in $S$.
\end{theorem}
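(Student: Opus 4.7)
The plan is to prove the nontrivial inclusion $\langle X \rangle \cap S \subseteq X^*$ (the other direction being obvious since $X^* \subseteq \langle X \rangle$ and $X^* \cap S \subseteq \langle X \rangle \cap S$). I would proceed by strong induction on $|w|$ for $w \in \langle X \rangle \cap S$. The base case $w = \varepsilon$ is immediate. The strategy for the inductive step is to reduce to a strictly shorter word by peeling an element of $X$ off the left or right.

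Concretely, the key reduction lemma I would aim for is: \emph{if $w \in S$ is nonempty and $w \in \langle X \rangle$, then $w$ admits a prefix in $X$ or a suffix in $X$.} Granted this lemma, the induction is immediate: if $w = x w'$ with $x \in X$, then $w' \in S$ by factoriality, and $w' = x^{-1} w \in \langle X \rangle$, so $w' \in X^*$ by the induction hypothesis, whence $w \in X^*$; the symmetric argument handles the case where $w$ ends in an element of $X$. Note that here $X$ being bifix is essential, since it guarantees that peeling from either end is unambiguous and does not interfere with the factorial property of $S$.

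The main obstacle, and the step where the hypothesis that $S$ is acyclic must be invoked, is the proof of the reduction lemma. My plan is to argue by contradiction: assume $w \in \langle X \rangle \cap S$ is nonempty with no prefix and no suffix in $X$, and derive a cycle in the extension graph $G(u)$ of some $u \in S$. Starting from a reduced expression $w = y_1 y_2 \cdots y_k$ in the free group with $y_i \in X \cup X^{-1}$ and $k$ minimal, I would analyze the cancellation pattern syllable by syllable. Because $w$ is a positive word whose first letter must come from some $y_i$ and whose first $X$-syllable cannot survive entirely into $w$ (else $w$ would have a prefix in $X$), the syllables must exhibit nontrivial overlaps between elements of $X$ and inverses of elements of $X$. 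Since $X$ is bifix, each such overlap produces, for a suitable common factor $u$, four distinct extensions $a_1 u b_1, a_1 u b_2, a_2 u b_1, a_2 u b_2 \in S$ (or an analogous closed configuration), which is exactly a $4$-cycle in $G(u)$.

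The delicate endgame is turning the algebraic cancellation diagram of the product $y_1 \cdots y_k$ into an \emph{explicit} extension-graph cycle at a well-chosen bispecial $u$. One must track which letter-occurrences in $w$ correspond to which syllables, and verify that the left and right letters supplying the cycle really come from $S$ and not merely from the abstract free-group expression. I expect that one either exhibits the cycle directly from the innermost cancellation in the reduced word, or performs an induction on $k$ peeling off a single cancelling pair $(y_i, y_{i+1}) = (x, x'^{-1})$ and using minimality of $k$ to rule out degenerate overlaps; this case analysis is where the bulk of the work sits.
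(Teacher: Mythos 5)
This theorem is not actually proved in the present paper: it is recalled as the Saturation Theorem (Theorem 6.2 of the reference on acyclic, connected and tree sets), so there is no internal proof to compare against. Judged on its own terms, your proposal has a genuine gap. The easy half is fine: once you know that every nonempty $w\in\langle X\rangle\cap S$ has a prefix or a suffix in $X$, the induction on $|w|$ closes (an acyclic set is factorial, so the peeled word stays in $S$). But that ``reduction lemma'' is equivalent to the theorem itself --- a nonempty word of $X^*$ automatically has a prefix in $X$ --- so the entire content of the statement has been pushed into the one step you only sketch, and the sketch does not go through as described.

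Concretely, the mechanism you propose (``each such overlap produces four distinct extensions $a_1ub_1,a_1ub_2,a_2ub_1,a_2ub_2\in S$, i.e.\ a $4$-cycle in $G(u)$'') is unjustified. A cancellation between consecutive syllables $x$ and $x'^{-1}$ of a reduced expression gives you two elements of $X$ sharing a common proper suffix (or prefix), say $x=ps$ and $x'=p's$; that yields only the two words $ps,p's\in S$, which form a single path of length two in the bipartite \emph{incidence graph} of $X$ (nonempty proper prefixes versus nonempty proper suffixes, with $p$ adjacent to $s$ iff $ps\in X$), not a cycle in any extension graph: there is no reason the cross terms $pu'$, $p'u$ lie in $S$. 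Moreover the partial products $y_1\cdots y_i$ are arbitrary elements of the free group and need not lie in $S$, so they cannot feed an extension graph either. Already the shortest nontrivial case $w=x_1x_2^{-1}x_3$, with $x_1=u\beta$, $x_2=\alpha\beta$, $x_3=\alpha v$ and $w=uv$, produces only the path $u-\beta-\alpha-v$ in the incidence graph and no cycle anywhere, yet one must still conclude that $uv$ has a prefix in $X$. What the cited proof actually does is (i) prove, by a separate induction transferring cycles of the incidence graph to cycles of extension graphs of shorter words, that acyclicity of $S$ forces the incidence graph of $X$ to be acyclic, and (ii) exploit the structure of its connected components (the prefixes of a component form a suffix code and all lie in a single right coset of $\langle X\rangle$) to control the reduction. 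Your proposal supplies no substitute for this machinery, so the key lemma remains unproved.
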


Actually, by a second result
of~\cite{BertheDeFeliceDolceLeroyPerrinReutenauerRindone2013d} (Theorem 6.1
referred to as the Freeness Theorem), if $S$ is acyclic,
 any bifix code $X\subset S$ is free, which means
that it is a  basis of the subgroup $\langle
 X\rangle$.
We will not use this result here and thus we will prove
directly that if $S$ is a uniformly recurrent tree set,
any finite $S$-maximal bifix code is free.

Before proving Theorem~\ref{newTheoremBasis},
we list some related results. The first one is
the main result of~\cite{BerstelDeFelicePerrinReutenauerRindone2012}.

\begin{corollary}
A Sturmian set has the finite index basis property.
\end{corollary}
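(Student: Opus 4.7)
The plan is to prove both directions of the equivalence---$X$ is a finite $S$-maximal bifix code of $S$-degree $d$ if and only if it is a basis of a subgroup of index $d$ of $F_A$---focusing on the forward direction, from which the converse will follow by a short maximality argument. Let $X \subset S$ be a finite $S$-maximal bifix code of $S$-degree $d$, and set $H = \langle X \rangle$. Since tree sets are neutral, the Cardinality Theorem (Theorem~\ref{theoremCardinality}) yields $\Card(X) = d(\Card(A)-1)+1$; since tree sets are acyclic, the Saturation Theorem (Theorem~\ref{propositionHcapF}) yields $H \cap S = X^* \cap S$. The goal is to upgrade these to the statement that $[F_A : H] = d$ and that $X$ is a basis of $H$.

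Fix a word $w \in S$ longer than every element of $X$; then $w$ is not an internal factor of $X$ and has exactly $d$ parses $(v_i, x_i, u_i)$, whose right residues $u_1, \ldots, u_d$ are pairwise distinct proper prefixes of $X$. I would then prove two claims. First, the cosets $Hu_1, \ldots, Hu_d$ are pairwise distinct: if $Hu_i = Hu_j$ then $u_iu_j^{-1} \in H$, which via the Saturation Theorem pulls back to an identity in $X^* \cap S$ that would collapse two parses of $w$, contradicting $u_i \neq u_j$. Second, every element of $F_A$ lies in some $Hu_i$: for each return word $r \in R_S(w)$, the word $wr \in S$ ends in $w$, so---since proper prefixes of $X$ have length less than $|w|$---it has the same right residues $u_1, \ldots, u_d$ as $w$. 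Comparing the parses of $wr$ with those of $w$ at each residue $u_i$ and applying Saturation, one sees that right multiplication by $r$ permutes the $d$ cosets $Hu_i$. Because $R_S(w)$ is a basis of $F_A$ by the Return Words Theorem, right multiplication by every element of $F_A$ preserves $\{Hu_1,\ldots,Hu_d\}$, so these are all the right cosets of $H$. Hence $[F_A:H] = d$; by Schreier's formula $H$ is free of rank $d(\Card(A)-1)+1 = \Card(X)$, and since $X$ generates $H$ with as many elements as the rank, $X$ is a basis of $H$.

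For the converse, suppose $X \subset S$ is a finite bifix code and a basis of a subgroup $H$ of index $d$. Schreier gives $\Card(X) = d(\Card(A)-1)+1$. If $X$ were not $S$-maximal, one could extend it to an $S$-maximal bifix code $Y \supsetneq X$ of some $S$-degree $d'$; the forward direction applied to $Y$ gives $\Card(Y) = d'(\Card(A)-1)+1$ with $\langle Y \rangle \supseteq H$, hence $d' \le d$ and $\Card(Y) \le \Card(X)$, contradicting $Y \supsetneq X$. Thus $X$ is $S$-maximal, and the forward direction then identifies its $S$-degree with the index of $\langle X \rangle$, namely $d$.

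The main obstacle is the distinctness of the cosets $Hu_i$. Translating a bare group identity $u_iu_j^{-1} \in H$ into a combinatorial collision of parses inside $S$ is where the tree condition enters most forcefully, requiring the Saturation Theorem together with a delicate analysis of the parse structure (and likely exploitation of the tree structure of extension graphs, beyond mere acyclicity). The closure-under-$F_A$ argument is comparatively routine once the action of return words on residues is understood via Saturation.
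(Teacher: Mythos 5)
Your argument is correct in outline, but it is not really a proof of the corollary so much as a re-derivation of the paper's Finite Index Basis Theorem (Theorem~\ref{newTheoremBasis}): the paper proves the corollary in one line, by observing that a Sturmian set is a uniformly recurrent tree set (Proposition~\ref{propositionSturmIsTree}, via every bispecial word being ordinary, together with uniform recurrence of Sturmian sets) and then citing Theorem~\ref{newTheoremBasis}. What you write is essentially the paper's proof of that theorem: choose a word $w$ with $\delta_X(w)=d$, take the $d$ suffixes of $w$ lying in the set of proper prefixes of $X$, use the Saturation Theorem (Theorem~\ref{propositionHcapF}) and suffix-comparability to get disjointness of the corresponding right cosets of $H=\langle X\rangle$, use right $S$-completeness to show that first return words to $w$ act on these cosets and the Return Words Theorem (Theorem~\ref{theoremJulien}) to conclude $F_A=HQ$, then the Cardinality Theorem plus Schreier's formula to pass from index $d$ to $X$ being a basis; your converse (embed $X$ in a finite $S$-maximal bifix code, compare cardinalities and indices) is also the paper's. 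The one step you never supply is precisely the content of the paper's proof of the corollary: you apply tree-set facts (``tree sets are neutral/acyclic'', the Return Words Theorem for uniformly recurrent tree sets) directly to the Sturmian set $S$ without justifying that a Sturmian set is a uniformly recurrent tree set. That bridge is easy (in a Sturmian set every word is ordinary, so each extension graph is a tree, and Sturmian sets are uniformly recurrent), but it should be stated; with it, your write-up amounts to reproving Theorem~\ref{newTheoremBasis} in the special case rather than simply invoking it.
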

\begin{proof}
This follows from Theorem~\ref{newTheoremBasis} since a Sturmian set
is a uniformly recurrent tree set (Proposition~\ref{propositionSturmIsTree}).
\end{proof}

The following examples shows that Theorem~\ref{newTheoremBasis}
may be false for a set $S$ which does not satisfy some of the hypotheses.

The first example is a uniformly recurrent set which is not neutral.
\begin{example}
Let $S$ be the Chacon set (see Example~\ref{exampleChacon}).
We have seen that $S$ is not neutral and thus not a tree set.
The set $S\cap A^2=\{aa,ab,bc,ca,cb\}$ is an $S$-maximal bifix code of $S$-degree $2$.
It is not a basis since $ca(aa)^{-1}ab=cb$. Thus $S$ does
not satisfy the finite index basis property.
\end{example}
In the second example, the set is neutral but not a tree set and is
not
uniformly recurrent.
\begin{example}
Let $S$ be the set of Example~\ref{exampleNeutralNotTree}. It is
not a tree set (and it is not either uniformly recurrent). The set
$S\cap A^2$ is the same as in the Chacon set. Thus $S$ does
not satisfy the finite index basis property.
\end{example}

In the last example we have a uniformly recurrent set which is neutral
but not a
tree set.
\begin{example}\label{exampleJulien3}
Let $S$ be the set on the alphabet $B=\{1,2,3\}$ of
Example~\ref{exampleJulien}. We have seen that
$S$ is  neutral but not a tree set. 

  Let $X=S\cap B^2$. We have
$X=\{12,13,22,23,31\}$. The set $X$ is not a basis since
$13=12(22)^{-1}23$. Thus $S$ does
not satisfy the finite index basis property.
\end{example}

We close this section with a  converse of Theorem~\ref{newTheoremBasis}.

\begin{proposition}\label{propositionConverseFIB}
A biextendable set $S$ such that $S\cap A^n$ is a basis of the subgroup $\langle 
A^n\rangle$
for all $n\ge 1$ is a tree set.
\end{proposition}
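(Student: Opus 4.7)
The plan is to prove that $G(w)$ is a tree for every $w \in S$ in two steps: first (i) that $G(w)$ is acyclic, then (ii) that $m(w)=e(w)-\ell(w)-r(w)+1=0$. Together these two facts give that $G(w)$ is an acyclic bipartite graph on $\ell(w)+r(w)$ vertices with exactly $\ell(w)+r(w)-1$ edges, which forces it to be a tree.

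For (i), I argue by contradiction. A simple cycle in the bipartite graph $G(w)$ traverses distinct vertices $a_1, b_1, a_2, b_2, \ldots, a_k, b_k$ (with $a_i \in L(w)$, $b_i \in R(w)$, $k \ge 2$) and uses the edges $(a_i, b_i)$ and $(a_{i+1}, b_i)$ for $i=1,\ldots,k$ (indices modulo $k$). Setting $n = |w| + 2$, the $2k$ pairwise distinct words $x_i := a_i w b_i$ and $y_i := a_{i+1} w b_i$ all lie in $S \cap A^n$. Computing in the free group on $A$, one finds $x_i y_i^{-1} = a_i a_{i+1}^{-1}$, so
$$x_1 y_1^{-1} x_2 y_2^{-1} \cdots x_k y_k^{-1} = (a_1 a_2^{-1})(a_2 a_3^{-1}) \cdots (a_k a_1^{-1}) = 1.$$
Since the $a_i$ are pairwise distinct and the $b_i$ are pairwise distinct, no adjacent cancellation occurs in this product when viewed in the alphabet $S \cap A^n$, so it is a non-trivial reduced relation. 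This contradicts the hypothesis that $S \cap A^n$ is a free basis of $\langle A^n \rangle$, so $G(w)$ must be acyclic.

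For (ii), I use Schreier's formula to pin down the cardinality of $S \cap A^n$. The subgroup $\langle A^n \rangle$ coincides with the kernel of the augmentation morphism from the free group on $A$ to $\mathbb{Z}/n\mathbb{Z}$ that sends every letter to $1$, and hence has index $n$. Schreier's formula then gives $|S \cap A^n| = n(|A|-1) + 1 = nk + 1$, so $S$ has complexity $p_n = nk + 1$ and the second difference $b_n = p_{n+2} - 2 p_{n+1} + p_n$ vanishes. Lemma~\ref{lemmaEnum} yields $\sum_{u \in A^n \cap S} m(u) = 0$ for every $n$. On the other hand, step (i) implies that $G(u)$ is a forest for every $u \in S$, whence $e(u) \le \ell(u) + r(u) - 1$ and $m(u) \le 0$. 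The only way the sum of these non-positive quantities can vanish is if $m(u) = 0$ for every $u \in S$, completing the proof.

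I expect the main obstacle to be step (i): converting a cycle into an honest non-trivial reduced relation in the basis $S \cap A^n$, with the distinctness of the $a_i$'s and $b_i$'s being precisely what prevents the relation from collapsing to a trivial one. Once acyclicity is secured, the counting argument in (ii) is essentially forced by Schreier's formula and the elementary identity of Lemma~\ref{lemmaEnum}.
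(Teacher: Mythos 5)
Your proof is correct and follows essentially the same route as the paper's: acyclicity of $G(w)$ via the telescoping relation $a_1wb_1(a_2wb_1)^{-1}\cdots$ contradicting freeness of the basis $S\cap A^{|w|+2}$, then Schreier's formula giving $\Card(S\cap A^n)=kn+1$ and a counting argument forcing $e(w)=\ell(w)+r(w)-1$, hence each $G(w)$ is a tree. The only cosmetic difference is that you invoke Lemma~\ref{lemmaEnum} for the counting step where the paper sums $e(w)$ over $A^{|w|}\cap S$ directly.
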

\begin{proof}
Set $k=\Card(A)-1$.
Since $ A^n$ generates a subgroup of index $n$, the
hypothesis implies that $\Card(A^n\cap S)=kn+1$ for all $n\ge 1$.
Consider $w\in S$ and set $m=|w|$. The set
$X=AwA\cap S$ is  included in $Y=S\cap A^{m+2}$. Since
$Y$ is  a basis of a subgroup, $X\subset Y$ is
a basis of the subgroup  $\langle X\rangle$.

This implies that the graph
$G(w)$ is acyclic. Indeed, assume that $(a_1,b_1,\ldots,$
$a_p,b_p,a_1)$ is
a cycle in $G(w)$ with $p\ge 2$, $a_i\in L(w)$, $b_i\in R(w)$ for
$1\le i\le p$ and
$a_1\ne a_p$. Then
$a_1wb_1,a_2wb_1,\ldots,$ $a_pwb_p,a_1wb_p\in X$.
 But
\begin{displaymath}
a_1wb_1(a_2wb_1)^{-1}a_2wb_2\cdots a_pwb_p(a_1wb_p)^{-1}=1,
\end{displaymath}
contradicting the fact that $X$ is a basis.

Since $G(w)$ is an acyclic graph with
$\ell(w)+r(w)$ vertices and $e(w)$ edges, we have $e(w)\le
\ell(w)+r(w)-1$.
But then
\begin{eqnarray*}
\Card(A^{m+2}\cap S)=\sum_{w\in A^m\cap S}e(w)&\le&\sum_{w\in A^m\cap
  S}(\ell(w)+r(w)-1)\\
&\le& 2\Card(A^{m+1}\cap S)-\Card(A^m\cap S)\\
&\le&k(m+2)+1.
\end{eqnarray*}
Since $\Card(A^{m+2}\cap S)=k(m+2)+1$, we have $e(w)=\ell(w)+r(w)-1$
for all $w\in A^m$. This implies that $G(w)$ is a tree for all $w\in
S$.
Thus $S$ is a tree set.
\end{proof}

\begin{corollary}\label{corollaryConverseFiniteIndex}
A uniformly recurrent set which has the finite index basis property is a tree
set.
\end{corollary}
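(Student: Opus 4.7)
The plan is to reduce Corollary~\ref{corollaryConverseFiniteIndex} to Proposition~\ref{propositionConverseFIB}. Since that proposition concludes the tree property from the hypothesis that $S\cap A^n$ is a basis of $\langle A^n\rangle$ for every $n\ge 1$, what I need to do is deduce this hypothesis from the finite index basis property together with uniform recurrence.

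First I would observe that a uniformly recurrent set is recurrent and therefore biextendable, so it is eligible as input for Proposition~\ref{propositionConverseFIB}. Moreover, since $S$ has the finite index basis property (as defined just before Theorem~\ref{newTheoremBasis}), it contains the alphabet $A$. Next, by Example~\ref{exampleUniform}, for each $n\ge 1$ the set $X_n = S\cap A^n$ is an $S$-maximal bifix code of $S$-degree $n$. Applying the finite index basis property to $X_n$, we get that $X_n$ is a basis of a subgroup $H_n$ of the free group on $A$ of index $n$.

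The key remaining step is to identify $H_n$ with $\langle A^n\rangle$. For this I would use the standard fact that $\langle A^n\rangle$ is precisely the kernel of the length-modulo-$n$ homomorphism from the free group on $A$ to $\mathbb{Z}/n\mathbb{Z}$ (sending each generator to $1$), so $\langle A^n\rangle$ has index exactly $n$. Since $X_n \subset A^n$, we have the inclusion $H_n = \langle X_n\rangle \subset \langle A^n\rangle$, and as both subgroups have index $n$, this forces $H_n = \langle A^n\rangle$. Therefore $S\cap A^n$ is a basis of $\langle A^n\rangle$ for every $n\ge 1$, and Proposition~\ref{propositionConverseFIB} yields that $S$ is a tree set.

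There is no real obstacle here: the argument is a short bookkeeping step once one knows the index of $\langle A^n\rangle$ in the free group. The mild subtlety is recognizing that the finite index basis property gives the right index directly, so that the inclusion of two finite-index subgroups can be promoted to equality without any further combinatorial work.
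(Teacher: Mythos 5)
Your proof is correct and follows essentially the same route as the paper: apply the finite index basis property to $S\cap A^n$ (an $S$-maximal bifix code of $S$-degree $n$ by Example~\ref{exampleUniform}), note that the resulting index-$n$ subgroup sits inside $\langle A^n\rangle$, which also has index $n$, so the two coincide, and conclude via Proposition~\ref{propositionConverseFIB}. Your added justifications (the length-mod-$n$ homomorphism and the promotion of the inclusion to equality) are just explicit versions of steps the paper leaves implicit.
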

\begin{proof}
Let $S$ be a uniformly recurrent set having the finite index basis
property.
For any $n\ge 1$, the set $S\cap A^n$
is an $S$-maximal bifix code of $S$-degree $n$ (Example~\ref{exampleUniform}).
Thus it is a basis of a subgroup of index $n$.
Since it is included in the subgroup generated by $A^n$, which has index $n$, 
it is a basis
of this subgroup. This implies that $S$ is a tree set by 
Proposition~\ref{propositionConverseFIB}.
\end{proof}

\subsection{Proof of the Finite Index Basis Theorem}

Let $S$ be a  set of words. 
For $w\in S$, let
\begin{displaymath}
\Gamma_S(w)=\{x\in S\mid wx\in S\cap A^+w\}
\end{displaymath}
be  the set of \emph{right return words} to $w$. 
When $S$ is recurrent, the set $\Gamma_S(w)$ 
is nonempty. Let
\begin{displaymath}
\RR_S(w)=\Gamma_S(w)\setminus\Gamma_S(w) A^+
\end{displaymath}
be  the 
set of \emph{first right  return words}.

The proof of Theorem~\ref{newTheoremBasis} uses several other
results, among which Theorem~\ref{propositionHcapF}
and the following result
from~\cite{BertheDeFeliceDolceLeroyPerrinReutenauerRindone2013d}
(Theorem 5.6).
\begin{theorem}\label{theoremJulien}
Let $S$ be a uniformly recurrent tree set containing the alphabet $A$. For
any $w\in S$, the set $\RR_S(w)$ is a basis of the free group on $A$.
\end{theorem}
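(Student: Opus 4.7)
I plan to prove Theorem~\ref{theoremJulien} in three steps: establish that $\Card(\RR_S(w))=\Card(A)$, show that $H:=\langle\RR_S(w)\rangle$ equals $F_A$, and deduce the basis property from the Hopf property of free groups.

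For the cardinality, I would first observe that $\RR_S(w)$ is an $S_w$-maximal prefix code, where $S_w:=\{v\in A^*:wv\in S\}$ is the prefix-closed right-follower set of $w$. Indeed, for any $v\in S_w$ with no prefix in $\RR_S(w)$, uniform recurrence lets us extend $v$ inside $S_w$ until $wvu\in A^+w$, producing $vu\in\RR_S(w)$ with $v$ as a prefix. Applying Lemma~\ref{lemmaArity} in the prefix-closed set $S_w$ then yields
\[
\Card(\RR_S(w))=1+\sum_{p\in P}(|R(wp)|-1),
\]
where $P$ is the set of proper prefixes of $\RR_S(w)$. The tree property of $S$, through the fact that each extension graph $G(wp)$ is a tree (so $e(wp)=\ell(wp)+r(wp)-1$), will be used to show that the sum telescopes to $\Card(A)-1$, giving $\Card(\RR_S(w))=\Card(A)$.

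For generation, I plan a Rauzy-graph argument. Let $R_n$ be the Rauzy graph of $S$ at level $n=|w|$: its vertices are $S\cap A^n$ and there is an $a$-labeled edge from $ua$ to $av$ for each $uav\in S$ with $|u|=|v|=n-1$. Uniform recurrence makes $R_n$ strongly connected, and neutrality together with complexity $kn+1$ (implied by the tree property, via Proposition~\ref{propComplexityNeutral}) gives $kn+1$ vertices and $k(n+1)+1$ edges, so $\pi_1(R_n,w)$ is free of rank $\Card(A)$. Closed walks at $w$ correspond to return words of $w$, with simple loops (those not revisiting $w$ internally) corresponding bijectively to elements of $\RR_S(w)$; reading edge labels along reduced paths defines a homomorphism $\phi:\pi_1(R_n,w)\to F_A$ whose image is $H$. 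The tree property --- applied not just at $w$ but at all factors of $S$ arising in the Rauzy graph --- will be leveraged to show both that $\phi$ is injective (so $\rank H=\Card(A)$) and that $H=F_A$.

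Combining the two previous steps, $\RR_S(w)$ is a generating set of $F_A$ of cardinality $\Card(A)=\rank(F_A)$; the Hopf property of finitely generated free groups then forces $\RR_S(w)$ to be a free basis. I expect the main obstacle to be showing $H=F_A$ in the second step, i.e., translating the combinatorial tree condition on $S$ into the group-theoretic statement that $H$ is the whole group rather than a proper subgroup of the same rank. A natural approach is to construct the Stallings graph of $H$ by folding the bouquet of $\Card(\RR_S(w))$ loops labeled by the words of $\RR_S(w)$ at a single vertex, and to use the tree property on the extension graphs of factors encountered during the folding process to show that the resulting graph collapses to a single vertex with $\Card(A)$ loops, which yields both injectivity of $\phi$ and finite index $[F_A:H]=1$ via Schreier's formula $\rank(H)=[F_A:H](\Card(A)-1)+1$.
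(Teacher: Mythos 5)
First, a structural point: the paper does not prove Theorem~\ref{theoremJulien} at all --- it imports it from \cite{BertheDeFeliceDolceLeroyPerrinReutenauerRindone2013d} (Theorem 5.6), where it is established by essentially the three-step strategy you outline (cardinality of return words in uniformly recurrent neutral sets, generation of the free group via Rauzy graphs and connectedness of the extension graphs, then the Hopf property). So your architecture is the right one, and your step 3 is complete as stated. The difficulty is that both substantive steps are left at the level of announcements, and in each case the announced mechanism does not work as described.

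In step 1, the identity $\Card(\RR_S(w))=1+\sum_{p\in P}(r(wp)-1)$ obtained from Lemma~\ref{lemmaArity} is correct, but the claimed telescoping is not available from the tree condition at the words $wp$. Neutrality of $wp$ gives $\sum_{a\in L(wp)}(r(awp)-1)=e(wp)-\ell(wp)=r(wp)-1$, i.e.\ it propagates the quantity $r(\cdot)-1$ under \emph{left} extension; the prefix tree of $\RR_S(w)$ branches by \emph{right} extension, and the relevant quantity $\sum_{b\in R(wp)}(r(wpb)-1)$ is governed by $\Card(S\cap wpA^2)$, which the extension graph of $wp$ does not control. The equality $\Card(\RR_S(w))=\Card(A)$ is a genuine theorem about uniformly recurrent neutral sets (it requires analysing complete return words from both sides, or an induction on $w$ through its bispecial extensions); it is not a corollary of Lemma~\ref{lemmaArity} plus neutrality of the $wp$. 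In step 2, the claim that closed walks at $w$ in $R_n$ ($n=|w|$) correspond to return words fails for walks of length at least $2$: a path in $R_n$ only guarantees that the word it spells has all its factors of length $n+1$ in $S$, not that it lies in $S$, so the labels of first-return loops at $w$ form a set that is in general strictly larger than $\RR_S(w)$ (for instance, labels of powers of cycles avoiding the vertex $w$). Hence the image of $\phi$ may properly contain $H=\langle\RR_S(w)\rangle$, and proving that this image is $F_A$ would not yield $H=F_A$. Finally, the assertion that the Stallings folding of the bouquet on $\RR_S(w)$ collapses to a single vertex with $\Card(A)$ loops is precisely the core of the theorem --- it is where connectedness of \emph{all} extension graphs must enter, via an induction comparing $R_{n+1}$ with $R_n$ --- and your proposal names it as the main obstacle without supplying the argument. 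As it stands, the proof has two genuine gaps, one in each of the first two steps.
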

\begin{proofof}{of Theorem~\ref{newTheoremBasis}}
    Assume first that $X$ is a finite $S$-maximal bifix code of $S$-degree
  $d$. Let $P$ be the set of proper prefixes of $X$. Let $H$ be
  the subgroup generated by $X$.

Let $u\in S$ be a word such that $\delta_X(u)=d$,
or, equivalently, which is not an internal factor of $X$. Let $Q$ be the set formed
of the $d$ suffixes of $u$ which are in $P$.

\begin{figure}[hbt]
\centering
\gasset{Nadjust=wh,AHnb=0}
\begin{picture}(50,15)(0,-5)
\node(uh)(20,10){}\node(endh)(60,10){}
\node(u)(0,5){}\node(v)(40,5){}\node(end)(60,5){}
\node(q)(10,0){}\node(qend)(40,0){}\node(r)(30,-5){}\node(endr)(60,-5){}
\drawedge(uh,endh){$u$}
\drawedge(u,v){$u$}\drawedge(v,end){$y$}\drawedge(q,qend){$q$}
\drawedge(r,endr){$r$}
\end{picture}
\caption{A word $y\in \RR_S(u)$.}\label{figGamma}
\end{figure}
Let us first show that the cosets $Hq$ for $q\in Q$ are
disjoint. Indeed, assume that $Hp\cap Hq\ne\emptyset$. It  implies $Hp=Hq$.
But any $p,q\in Q$ are comparable for the suffix order. 
Assuming that $q$ is longer than $p$, we have
$q=tp$ for some $t\in P$. Then $Hp=Hq$ implies $Ht=H$ and thus $t\in
H\cap S$. By Theorem~\ref{propositionHcapF}, since $S$ is acyclic,
 this implies $t\in X^*$
and thus $t=\varepsilon$. Thus $p=q$.

Denote by $F_A$ the free group on $A$. Let
\begin{displaymath}
  V=\{v\in F_A\mid Qv\subset HQ\}\,.
\end{displaymath}

For any $v\in V$ the map $p\mapsto q$ from $Q$ into itself
defined by $pv\in Hq$
is a permutation of $Q$. Indeed, suppose that for
$p,p'\in Q$, one has $pv,p'v\in Hq$ for some $q\in Q$. Then $qv^{-1}$ is
in $Hp\cap Hp'$ and thus $p=p'$ by the above argument.

The set $V$ is a subgroup of $F_A$. Indeed, $1\in V$. Next, let $v\in V$. Then
for any $q\in Q$, since $v$ defines a permutation of $Q$, there
is a $p\in Q$ such that $pv\in
Hq$. Then $q v^{-1} \in Hp$. This shows that $v^{-1}\in V$.
Next, if $v,w\in V$, then $Qvw\subset HQw\subset HQ$ and thus $vw\in
V$.

We show that the set $\RR_S(u)$ is contained in $V$. Indeed, let $q\in
Q$ and $y\in \RR_S(u)$. Since $q$ is a suffix of $u$, $qy$ is a suffix
of $uy$, and since $uy$ is in $S$ (by definition of $\RR_S(u)$),
also $qy$ is in $S$.  Since $X$ is an $S$-maximal bifix code,
it is an $S$-maximal prefix code and thus
it is right $S$-complete. This
implies that $qy$ is a prefix of a word in $X^*$ and thus
there is a word $r\in P$ such that $qy\in X^*r$. We verify that the
word $r$ is a suffix of $u$.  Since $y\in \RR_S(u)$, there is a word
$y'$ such that $uy=y'u$. Consequently, $r$ is a suffix of $y'u$, and
in fact the word $r$ is a suffix of $u$. Indeed, one has $|r|\le |u|$
since otherwise $u$ is in the set $I(X)$ of internal
factors of $X$, and this is not the case. Thus we
have $r\in Q$ (see Figure~\ref{figGamma}). Since $X^*\subset H$ and
$r\in Q$, we have $qy\in HQ$. Thus $y\in V$.

By Theorem~\ref{theoremJulien}, the group generated by $\RR_S(u)$ is the free
group  on
$A$.  Since $\RR_S(u)\subset V$, and since $V$ is a subgroup of
$F_A$, we have $V=F_A$. Thus $Qw\subset HQ$ for any $w\in
F_A$.  Since $1\in Q$, we have in particular $w\in HQ$.  Thus
$F_A=HQ$. Since $\Card(Q)=d$, and since the right cosets $Hq$ for
$q\in Q$ are pairwise disjoint, this shows that $H$ is a subgroup of
index $d$. Since $S$ is a recurrent tree set, it is neutral and thus,
by Theorem~\ref{theoremCardinality}, we have $\Card(X)=
d(\Card(A)-1)+1$. But  $H$ is a free group which, by Schreier's
Formula, is of rank $d(\Card(A)-1)+1$.
Since $X$ generates $H$, this implies that 
 $X$ is a basis of $H$.


Assume conversely that the finite bifix code $X\subset S$ is a basis of the
group $H=\langle X\rangle$ and that $H$ has index $d$. Since $X$ is a
basis of $H$, by Schreier's Formula, we have $\Card(X)= (k-1)d+1$, where
$k=\Card(A)$. The case $k=1$ is straightforward; thus we assume
$k\ge2$. 
By
Theorem 4.4.3 in~\cite{BerstelDeFelicePerrinReutenauerRindone2012},
 if $S$ is a uniformly recurrent set,
any finite bifix code contained in $S$ is contained in a finite
$S$-maximal bifix code. Thus there is a finite $S$-maximal bifix
code $Y$ containing $X$. Let $e$ be the $S$-degree of $Y$. By the
first part of the proof, $Y$ is a basis of a subgroup $K$ of index $e$
of the free group on $A$.  In particular, it has $(k-1)e+1$ elements. Since
$X\subset Y$, we have $(k-1)d+1\le (k-1)e+1$ and thus $d\le e$. On the
other hand, since $H$ is included in $K$, $d$ is a multiple of $e$ and
thus $e\le d$. We conclude that $d=e$ and thus that $X=Y$.
\end{proofof}

\bibliographystyle{plain}
\bibliography{finiteIndexBasisProperty}

\end{document}